\def\ind{{\mathchoice {\rm 1\mskip-4mu l} {\rm 1\mskip-4mu l}
{\rm 1\mskip-4.5mu l} {\rm 1\mskip-5mu l}}}%%%%
\newtheorem{proposition}{Proposition}
\newtheorem{defi}{Definition}
\newtheorem{lem}{Lemma}
\begin{document}

\title{Lowest Unique Bid Auctions \\ with Resubmission Opportunities}

\author{Yida Xu and Hamidou Tembine \thanks{ The authors are with Learning \& Game Theory Lab, New York University Abu Dhabi, Email: \{yida.xu,tembine\}@nyu.edu} }

\maketitle

\begin{abstract}
The recent online platforms propose multiple
items for bidding. The state of the art, however, is limited to the analysis of one item auction without resubmission.
In this paper we study multi-item lowest unique bid auctions (LUBA)  with resubmission in discrete  bid spaces under budget constraints.  We show that the game does not have pure Bayes-Nash equilibria
(except in very special cases). However,  at least one mixed Bayes-Nash equilibria exists for arbitrary number of bidders and items.
The equilibrium is explicitly computed for two-bidder setup with resubmission possibilities. In the general setting we propose a  distributed strategic learning algorithm  to
approximate equilibria. Computer simulations indicate that the error quickly decays in few number of steps. When the number of bidders per item
follows a Poisson distribution, it is shown that the seller can get a non-negligible revenue on several items, and hence making a partial revelation of the true value of the items.
Finally,  the attitude of the bidders towards the risk is considered. In contrast to risk-neutral agents who bids very small values, the cumulative distribution and  the bidding support of  risk-sensitive agents are more distributed.

\end{abstract}

{\bf Keywords: 
Auction, Bayes-Nash Equilibrium, Pareto optimality, Learning Mechanism.}

\newpage

\tableofcontents

\newpage

\section{Introduction}
Information technology has revolutionized the traditional structure of economic and financial markets.
The removal of geographical and time constraints has fostered the growth of online auction markets, which now include millions of economic agents and companies worldwide
and annual transaction volumes in the billions of US dollars. Here, we study bidders' learning and behavior of a little studied type of online auctions  called
lowest unique bid auction (LUBA). LUBAs are online auctions which have reached a considerable success during last decade.
Their key feature is that they are reverse auctions: rather than the bidder with the highest bid (as in the case of traditional auctions), the winner is the bidder who makes the lowest unique bid.
The recent online platforms propose even multiple items and bidders can submit their bidders over several rounds [before a winner is decided].
The bidding status  that a bidder observes changes according to the actions (bids) chosen by all the (active) bidders on  the corresponding item.
 Due to limited number of items and budget restrictions, bidders need to manage their decision in a  strategic way.

\subsection*{Literature review}

{\bf One-item Auctions:} The theory of auctions as games of incomplete information originated in 1961 in the work of Vickrey~\cite{reftv3}.
 While  auctions with homogeneous valuation distributions (symmetric auctions) and self-interested non-spiteful bidders is well-investigated in the literature, auction with asymmetric bidders remain a challenging open problem~(see \cite{reftv1,lebrun1,lebrun2} and the references therein).
 With asymmetric auctions, the  expected ``revenue equivalence theorem'' \cite{myeron1981} does not hold, i.e., there is a class of cumulative valuation distribution  function such that the revenue of the seller (auctioneer) depends on the auction mechanism employed. In addition, there is no ranking revenue between the auction mechanisms (first, second, English or Deutch).

{\bf Multi-Item Auctions:}
There are few research articles on multi-item auctions \cite{multiitem1,multiitem2,multiitem3,multiitem4}. Most of these works present
computer simulation  and numerical experiments results. However, no analysis of the outcome of the multi-item auction is available. There is no analysis  of the   equilibrium seeking algorithm  therein.
\begin{figure}[!htb]
  \centering
   \includegraphics[width=0.97\textwidth, height=7cm]{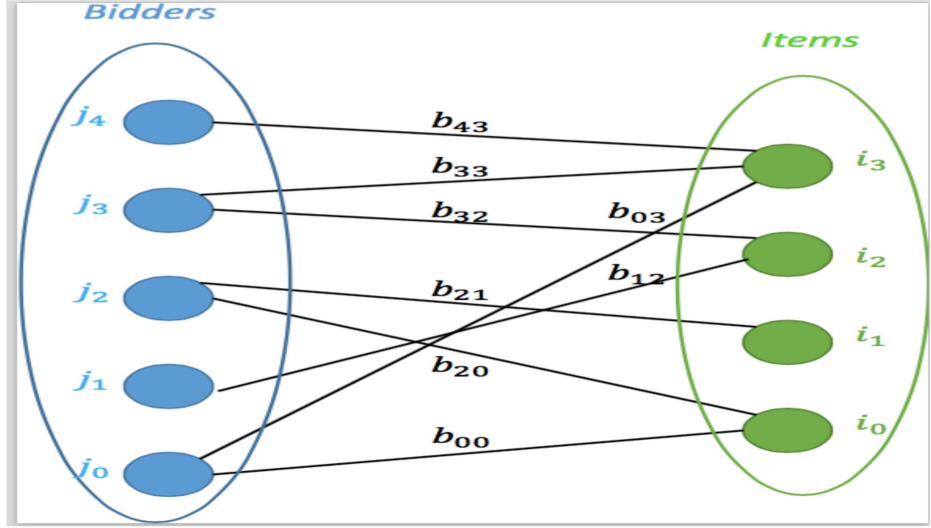}
  \caption{Depending its remaining budgets, each bidder may act on multiple items and may place multiple bids per item} \label{fig1}
\end{figure}
The above mentioned works do not consider the the resubmission feature. Note however that  the possibility for a bidder to resubmit another bid for the same item is already implemented in practice in the online auction markets.

{\bf LUBA:}  LUBA is very different than the second price auctions. The particularity of LUBA is that it focuses on the lowest unique bid, which creates lot of difficulties in terms of analysis. LUBA is different that the lowest cost auction called procurement auction which is widely used in e-commerce and cloud resource pooling \cite{infocom15v1} or in demand-supply matching in power grids \cite{infocom15v2}.
Single-item LUBAs are a special case of unmatched bid auctions which have been studied by other researchers \cite{houba,dwn,rapp,wang,vino0,nicolas}.  The authors \cite{dwn} run laboratory experiments with minbid auctions. They consider the case where players are restricted to only one bid and compare the results from their laboratory experiment with a Monte Carlo simulation.  The authors in \cite{rapp} consider high and low  unique bid auctions where bidders are also restricted to a single bid. They provide a numerical approximation of the solution for a game-theoretic model and compare it with the results of a laboratory experiment. The work in  \cite{wang} conducts a lowest unique positive integer experiment and contrast the observed behavior with the solution of a Poisson game with a single bid per player. \cite{vino0,vino1,vino2} conducted  field experiments on Lowest-Unmatched Price Auctions with mostly large prizes involving large numbers of participants (tens and hundreds of thousands).
\cite{huang}  studies truthful multi-unit transportation procurement auctions. The work in \cite{ecom} discusses security and privacy issues of multi-item reverse Vickrey auction by designing  more secure protocols.
Most of the above works restrict the number of submissions per bidder to one. The recent focus within the auction field has been multi-item auctions where bidders are not restricted to buying only one item of the merchandise. The bidder can also place multiple bids for each item. It has been of practical importance in Internet auction sites and has been widely executed by them.

\subsection*{Contribution}
Our contribution can be summarized as follows. Mimicking online platform auctions, we propose and analyze a multi-item LUBA game with budget constraint, registration fee and resubmission cost. We show that the analysis can be reduced into a constrained finite game (with incomplete information) by eliminating the bids that are higher than the value of the item. Using classical fixed-point theorems, there is at least one Bayes-Nash equilibrium in mixed strategies.  Next, we address the question of computation and stability of such an equilibrium. We provide explicitly the equilibrium structure in special cases.
We provide a learning algorithm that is able to locate equilibria.
An imitative combined fully distributed payoff and strategy learning (imitative CODIPAS learning) that is adapted to LUBA is proposed to locate/approximate equilibria. We examine how the bidders of the game  are able to learn about the online system output using their own-independent learning strategies and own-independent valuation. The numerical investigation  shows that the proposed algorithm can effectively learn Nash equilibrium  in few  steps.  It is shown that the auctioneers can make a positive revenue when the number of bidders per bid exceeds a certain threshold.
We then examine the attitude of the bidders towards the risk. In contrast to risk-neutral agents who bids very small values, the cumulative distribution and  the bidding support of  risk-sensitive agents are more distributed.

\subsection*{Structure of the paper}
The paper is organized as follows. In Section \ref{sec:2}, we introduce LUBA mechanism. In Sect \ref{sec:3} we set up the problem statement, whose solution approach is given in Section \ref{sec:4}. We provide an imitative learning algorithm for approximating Nash equilibria in Section \ref{sec:5}. Section \ref{sec:2risk} focuses on risk-sensitive bidders' behaviors.
Section \ref{sec:9} concludes the paper.

We summarize some of the notations in Table \ref{tablenotationjournal}.
\begin{table}[htb]
\caption{Summary of Notations} \label{tablenotationjournal}
\begin{center}
\begin{tabular}{ll}
\hline
  Symbol & Meaning \\ \hline
  $\mathcal{J}$ & set of potential bidders  \\
  $n$ & cardinality of $\mathcal{J}$\\
  $\mathcal{B}$ & bid (action) space\\
  $\mathcal{I}$ & set of items (from auctioneers) \\
  $m$ & cardinality of $\mathcal{I}$\\
  $B_{ji}$ & bid set of bidder $j$ on item $i$ \\
  $r_0(B_1,\ldots,B_n)$ & payoff of the auctioneer\\
  $r_j(B_1,\ldots,B_n | \ v_j)$ & payoff of  bidder $j$\\
  $\ind_{\{.\}}$ & indicator function.\\
  $F$ & cumulative distribution of the valuation matrix\\
  $F_j$ & (marginal) cumulative distribution of \\ & the valuation vector of bidder $j$\
  \\ \hline
\end{tabular}
\end{center}
\end{table}

\section{Background on LUBA} \label{sec:2}
A lowest unique bid auction operates under the following three main rules:
\begin{itemize}
\item Whoever bids the lowest unique positive amount wins.
\item If there is no unique bid then no one wins. In particular, the bid of the winner should be unmatched.
\item No participant can see bids placed by other participants.
\end{itemize}

The term "lowest unique bid'' means  is the lowest amount that nobody else has bid and  its computation is illustrated in Table \ref{ILUBATable} and Fig.\ref{ILUBAFigure} below.

\begin{table}[htb]
\caption{Unique lowest bid auctions} \label{ILUBATable}
\begin{center}
\begin{tabular}{|c|c|c|}
\hline
Bid Amount & Number of Bids & Status\\
\hline
1 cent & 2 & Not unique \\
\hline
2 cents & 0 &  \\
\hline
3 cents & 1 & Lowest Unique Bid!  \\
\hline
4 cents & 3 & Not Unique\\
\hline
5 cents & 2 &  \\
\hline
6 cents & 1 & Unique but not the Lowest\\
\hline
7 cents & 0 &  \\
\hline
8 cents & 1 &  Highest unique bid\\
\hline
9 cents & 3 & Not unique, highest bid \\
\hline
10 cents & 0 & \\
\hline
\end{tabular}
\end{center}
\end{table}
\begin{figure}[!htb]
  \centering
    \includegraphics[width=0.9\textwidth]{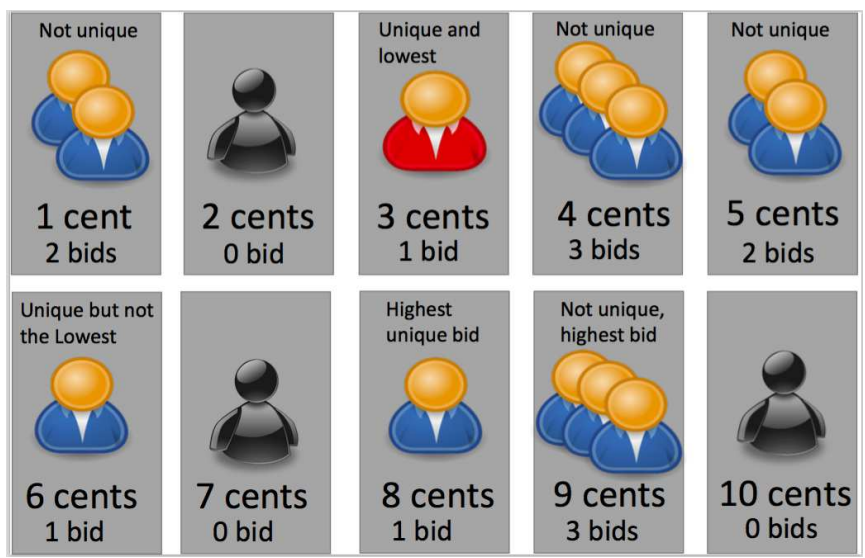}
  \caption{Unique lowest bid auctions. Illustration of the rules of a LUBA system. The winner results to be the bidder who has bid 3 cents, which is the lowest unique bid. All the other bids are not unique expect from the one of 6 cents and 8 cents, however, they are not the lowest one. By contrast, in highest unique bid auctions the rules are reversed, and the winner is the bidder making the unique highest bid (8 cents)} \label{ILUBAFigure}
\end{figure}

 Bids may be any amount between $0.01$ and $10^3$ allowing people to buy an item at an incredibly low price. The cost of the item is covered by the entry/administration/bid fee  paid by  all participants and by the resubmission fee to make a bid.

 In this example the Bid of 3 cents is the Lowest Unique Bid and is awarded the auction result and would have to pay only 3 cents after the registration fee. He/she has therefore purchased an item for 3 cents + the Administration fee paid when he/she placed the bid and the registration fee.
If the Bidder of the successful bid had more than one bid they would need to add a certain cost for each additional bid placed.

Each bidder can decide to participate or not. A participant can start bidding in just few steps:

\begin{itemize} \item Register - complete the registration form. Your password will then be emailed to the email address you provide us. It is vital you keep your username and password confidential, otherwise you may allow others to access your Bid Bank and place bids from your account.

The user is fully responsible for all activities that occur through your subscription and under your password and account.
\item  Login - once the user receive its password via email, he/she login using her email address as your username and your password. For security issue the user can change her password.

\item Purchase Credit - Buy Credits from your account. Smiles or other related credits can be converted into a auction credit.

\end{itemize}

\section{Problem Statement} \label{sec:3}

\subsection*{Multi-item contest} A multi-item contest is a situation in which players exert effort for each item in an attempt to win a prize.
Decision to participate or not is costly and left to the players. All the efforts are sunk while only
the winner gets the prize.
An important ingredient in describing a multi-item contest  with the set of potential participants $\mathcal{J}=
\{1, . . . , n\}$, the set of auctioneers proposing the set of items $\mathcal{I}=\{1,\ldots, m\}$
 is the contest success function, which takes the efforts $B$  of the agents and
converts them into each agent's probability of winning per item: $$P_j : \ {B}=(B_{ji})_{j,i}\subset \mathbb{N}^{n\times m} \mapsto [0, 1]^m. $$ The (expected)
payoff of a risk-neutral player $j$ with  $v_{ji}$ a value of winning  item $i$ and a cost of effort function $c_{ji}$ is
 $$\sum_{i\in \mathcal{I}}P_{ji}(B) v_{ji} -c_{ji}(B).$$

\subsection*{Multi-item LUBA}One very popular multi-item contest used in online platform is multi-item LUBA. In multi-item LUBA,
Multiple sellers (auctioneers) have multiple items (objects)   to sell on the online platform. These sellers have adopted a lowest unique bid auction (LUBA) rule.
 The multi-item LUBA game (with resubmission) is as follows.

 There are $n\geq 2$ bidders for $m$ items proposed by the sellers.
 A bidder's assessment of the worth of the offered object for auction is called a value.
 In an multi-item auction context a bidder has a vector of values, one  value per good.

 {\bf Incomplete information about the others: } A bidder may have its own valuation  vector but not the valuation vector of the others.

 The bidders are assumed to have  (possibly heterogeneous) valuation distributions.
 Each bidder independently submits a possibly several bid per item without seeing the others' bids. The submission fee per bid on item $i$ is $c_{i}.$
 If there is only a unique lowest bid, the object is sold to the bidder with unique lowest bid.
 Each bidder pays the cost $c_r$ for the registration and administration fee. In addition, the winner pays her winning bid on item $i$, that is, the price is  the lowest unique bid on that item.
  Note that the existence of a winning bid is not guaranteed, as for example, $n$ identical bids demonstrate. In the absence of a winner, the item remains with the auctioneer. Note that, a tie-breaking rule can be used in that case. We denote by $v_{ji}$ the valuation of bidder $j$ for item $i.$ The random variable ${v}_{ji}$ has  support $[\underline{v},\bar{v}]$ where  $0<\underline{v}< \bar{v}.$ Each bidder $j$ has a initial total budget of $\bar{b}_j$ to be used for all items. Each bidder $j$ knows its own-valuation  vector $v_j=(v_{ji})_{i\in \mathcal{I}}$ and own-bid  vector $(b_{ji})_{i\in \mathcal{I}}$ but not  $v_{-j}=(v_{j'})_{j'\neq j}$ the valuation of the other bidders.
Note that each bidder can resubmit bids a certain number of times subject to her available budget, each resubmission for
 item $i$ will cost $c_i.$ If bidder $j$
 has (re)submitted $n_{ji}$ times on item $i$ her total submission/bidding cost would be $n_{ji}c_i$ in addition to the registration fee. Denote
 the set that contains all the bids of bidder $j$ on item $i$ by  $B_{ji}\subset \mathbb{N}.$  Thus, $n_{ji}=|B_{ji}|$ is the cardinality of the strictly positive bids by $j$ on item $i.$ The set of bidders who are submitting $b$ on item $i$ is denoted by
 $$ N_{i,b}=\{ j\in \mathcal{J} \ | \ b \in B_{ji} \}. $$ In order to get the set of all  unique bids, we introduce the following:
 The set of all positive natural numbers that were chosen by only one bidder on item $i$ is
 $$B_{i}^*=\{ b>0 \ |  \  | N_{i,b} |=1 \}.$$
 If $B_{i}^*=\emptyset$ then there is no winner on item $i$ at that round (after all the resubmission possibilities). If $B_{i}^*\neq \emptyset$ then there is a
 winner on item $i$ and the winning bid is $\inf B_{i}^*$ and winner is $j^*\in N_{i,\inf B_{i}^*}.$ The payoff of bidder $j$ on item $i$ at that round would be
 $$
 r_{ji}=v_{ji}-|B_{ji}| c_i - \inf B_{i}^*-c_r,
 $$ if $j$ is a winner on item $i$, and
 
 $$
 r_{ji}=-|B_{ji}| c_i - c_r,
 $$ if $j$ is not a winner on item $i.$   The payoff of bidder $j$ on item $i$ is zero if $B_{ji}$ is reduced to $\{0\}$ (or equivalently the empty set).

 \begin{eqnarray}&& r_{ji} (B) \\  \nonumber
 &=& [-c_r-c_i |B_{ji}|-(v_{ji}-b_{ji})\ind_{\{ b_{ji}=\inf B_i^*\}}]\ind_{\{  B_{ji}\neq \{0\}\}},
 \end{eqnarray}
where the infininum of the empty set is zero.

  \begin{eqnarray} r_j (B) =\sum_{i\in \mathcal{I}} r_{ji}(B).
 \end{eqnarray}

 The instant payoff of the auctioneer of item $i$ is $$ r_{a,i}=\left(\sum_{j}c_r\ind_{\{ B_{ji}\neq \emptyset \}}+ \inf B_{i}^*+\sum_{j=1} |B_{ji}| c_i \right)  -  v_{a,i},$$ where $v_{a,i}$ is the realized valuation of the auctioneer for item $i.$
 The instant payoff of the auctioneer of a set of item $I$ is $ r_{a,\mathcal{I}}=\sum_{i\in \mathcal{I}} r_{a,i}.$
Bidders are interested in optimizing their payoffs and the auctioneers are interested in their revenue.
\subsection{Solution Concepts}
Since the game is of incomplete information, the strategies must be specified as a function of the information structure.
 \begin{defi}
 A pure strategy of a bidder  is a choice of a subset of natural numbers given the own-value and own-budget.
Thus, given its own valuation vector
 $v_j=(v_{ji})_i,$ bidder $j$ will choose an action  $(B_{ji})_i$  that satisfies the budget constraints $$\  \sum_{i}c_r\ind_{\{ B_{ji}\neq \{0\} \}}+\sum_{i=1}^m  [\inf B_{i}^*]\ind_{B_{ji} \cap [\inf B_{i}^*]}+\sum_{i=1}^m|B_{ji}|c_i\leq \bar{b}_j.$$  The set of multi-item bid space for bidder $j$ is
 $$\begin{array}{c}
 \mathcal{B}_j(v_j,\bar{b}_j)=\{ (B_{ji})_i \ | \  \  B_{ji} \subset \{0,1,\ldots, \bar{b}_j-c_r\},\  \\  \sum_{i}c_r\ind_{\{ B_{ji}\neq \{0\} \}}+\sum_{i=1}^m  [\inf B_{i}^*]\ind_{B_{ji} \cap [\inf B_{i}^*]}+\sum_{i=1}^m|B_{ji}|c_i\leq \bar{b}_j\}.
 \end{array}$$

 A pure strategy is a mapping $ v_j \mapsto B_j\subset \mathbb{N}.$
  A  constrained pure strategy is a mapping $ v_j \mapsto B_j\in \mathcal{B}_j.$ A mixed strategy is a probability measure over the set of pure strategies.

  \end{defi}

   The action set $ \mathcal{B}_j(v_j,\bar{b}_j)$ is finite because of budget limitation.
 A bid $b_{ji}\in B_{ji}$ is hence less than $\min(\bar{b}_j, v_{ji}-c_r).$

%
%\begin{rem}[We do not consider security issue] Consider the following LUBA  example.
% A seller announces an online LUBA for selling an object (indivisible). Player $1$ participates in this auction. She has a valuation $1000$ for the item and bids $10$ in anticipation of maximizing her payoff. Player $2$, her competitor who has valuation $1500,$ breaks the server before the auction is closed and finds out the lowest bid in the system before closure  and places her own bid as $9.$ Player 2 will win the object, though she has higher valuation.
%This example clearly highlights the importance of security in LUBA auction. In the present work we consider non-manipulative and non-malicious bidders.
%\end{rem}

\subsection{Bidders' equilibria }

We define a solution concept of the above game with incomplete information: Bayes-Nash equilibrium.

\begin{defi}
A mixed Bayes-Nash strategy equilibrium is a profile $(s_{j}(v_j))_j$ such that for all bidders $j$
$$
\mathbb{E}_{s_j,s_{-j}}{r}_{j}(B_j(v_j),B_{-j} | \ v_j )\geq \mathbb{E}_{s_j',s_{-j}}{r}_{j}(B'_j,B_{-j} | \ v_j),
$$
for any strategy $s'_j.$
\end{defi}

\section{Analysis: Risk-Neutral Case} \label{sec:4}
We are interested in the equilibria, equilibrium payoffs of the bidders and revenue of the auctioneer.
Note that the information structure is significantly reduced.
Since $j$  does not know the distribution of the random matrix $v_{-j}=(v_{j'})_{j'\neq j}$ which may influence $B_{-j},$ it is unclear how can bidder $j$ evaluate the expected payoff $\mathbb{E}_{s_j,s_{-j}}{r}_{j}(B_j(v_j),B_{-j} | \ v_j ).$  Therefore the expected payoff needs to be learned by $j.$

\begin{proposition} \label{domino} Let $c_i>0.$ Any realized value of item $i$ such that $v_{ji} < c+c_r$  leads to a trivial choice (i.e., $\{0\}$)  for bidder $j.$ Any  participative bidding strategy $B_{ji}$ on item $i$ such that $|B_{ji}|> \frac{v_{ji}-c_r}{c_i} $ is dominated by the strategy $\{0\}.$  Any strategy $B$ such $b_{ji}\ind_{b_{ji}=\inf B_i^*}+c+c_r > \bar{b}_j$ is dominated by $\{0\}$ on item $i.$
\end{proposition}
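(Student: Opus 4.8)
The plan is to exploit the additive decomposition $r_j(B)=\sum_{i\in\mathcal{I}}r_{ji}(B)$ recorded above, so that to show the trivial choice $\{0\}$ dominates on item $i$ it suffices to compare the single-item payoff $r_{ji}$ of the candidate strategy against that of $\{0\}$, which is identically zero. Throughout I would lean on two elementary facts: the fees satisfy $c_i>0$ and $c_r>0$, and every winning bid is a strictly positive integer, so $\inf B_i^*\geq 1$ whenever $B_i^*\neq\emptyset$. The argument then reduces to a case split on whether bidder $j$ wins item $i$ or not, and the losing branch is immediate: for any participative $B_{ji}\neq\{0\}$ the payoff is $-|B_{ji}|c_i-c_r<0$, already strictly below the zero of $\{0\}$. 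Hence in each claim only the winning branch requires attention.

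For the winning branch of the first claim I would use $|B_{ji}|\geq 1$ and $\inf B_i^*\geq 1$ to write $r_{ji}=v_{ji}-|B_{ji}|c_i-\inf B_i^*-c_r\leq (v_{ji}-c_i-c_r)-\inf B_i^*<-\inf B_i^*\leq -1<0$, where the strict inequality is exactly the hypothesis $v_{ji}<c_i+c_r$. For the second claim I would first rewrite the cardinality hypothesis as $|B_{ji}|c_i+c_r>v_{ji}$, whence the winning payoff obeys $r_{ji}=(v_{ji}-|B_{ji}|c_i-c_r)-\inf B_i^*<-\inf B_i^*\leq -1<0$. In both cases every realization of the opponents' profile $B_{-j}$ yields a strictly negative payoff on item $i$, so $\{0\}$ strictly dominates.

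The third claim is of a feasibility, rather than a payoff-comparison, nature. Here I would read $b_{ji}\ind_{\{b_{ji}=\inf B_i^*\}}+c_i+c_r$ as the minimal budget consumed by $j$ on item $i$ in the state where the bid $b_{ji}$ is the winning lowest-unique bid: the registration fee $c_r$, at least one submission cost $c_i$, and the winning payment $b_{ji}$. Matching these against the per-item terms in the budget constraint defining the feasible set $\mathcal{B}_j(v_j,\bar{b}_j)$, any $B_{ji}$ for which this quantity exceeds $\bar{b}_j$ fails the constraint in the winning state and is therefore excluded from $\mathcal{B}_j$; the admissible fallback $\{0\}$, which trivially lies in $\mathcal{B}_j$, then dominates.

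The main obstacle I anticipate is the precise formalization of the third claim: because the winning payment enters through the indicator $\ind_{\{b_{ji}=\inf B_i^*\}}$, the budget is consumed only in certain realized profiles, so one must decide whether the budget constraint is imposed pointwise (for every realization) or only in expectation, and phrase the domination accordingly. The first two claims are routine once the additive structure and the bound $\inf B_i^*\geq 1$ are in hand; the only care needed there is to confirm strictness, which follows because the registration and submission fees are strictly positive and the winning bid is at least one.
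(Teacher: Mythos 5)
Your proof is correct and follows essentially the same route as the paper's: compare the best-case (winning) payoff of a participative strategy against the zero payoff of $\{0\}$, show it is strictly negative whenever the stated cost conditions hold, and handle the budget clause as a feasibility restriction on the bid space. If anything, your version is more careful than the paper's, which argues via a bound on an individual bid $b_{ji}$ rather than the cardinality condition $|B_{ji}|>(v_{ji}-c_r)/c_i$ and uses an undefined constant $\tilde{c}$; your explicit win/lose case split and the observation $\inf B_i^*\geq 1$ close those gaps without changing the underlying idea.
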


\begin{proof}[Proof of Proposition \ref{domino}]
By budget constraint, $j'$s bids must fulfill  the budget restriction $$\sum_i c_r\ind_{B_{ji}\neq \{0\}}+\sum_{i}|B_{ji}| c_i + \sum_{i} b_{ji}\ind_{b_{ji}=\inf B_i^*}\leq \bar{b}_j.$$
If agent $j$ bids on item $i$ with $b_{ij}>v_{ji} - {c}_r-c$ then $j$ gets in the bets case  $v_{ji} - \tilde{c}-b_{ji}$ which is negative (loss) and $j$ could guarantee zero as payoff by not participating. Therefore the strategy $0$ dominates any $b_{ji}$ higher than $v_{ji} - \tilde{c}.$  Thus, the bid space of agent $j$ on item $i$ can be reduced to $$B_{ji}\subset  \prod_{i=1}^m \{0,1,2,\ldots,  \min(v_{ji} - \tilde{c}, \bar{b}_j)  \}$$ such that
	$$\sum_i c_r\ind_{B_{ji}\neq \{0\}}+\sum_{i}|B_{ji}| c_i + \sum_{i} b_{ji}\ind_{b_{ji}=\inf B_i^*}\leq \bar{b}_j.$$ This completes the proof.
	\end{proof}

As a consequence of Proposition \ref{domino},
\begin{itemize} \item the number of resubmissions on item $i$ needs to be bounded by  $|B_{ji}|\leq  \frac{v_{ji}-c_r}{c_i}$ to be rewarding.
\item when analyzing Bayes-Nash equilibria one can limit  the action space to all the subsets of the finite set  $\prod_{i=1}^m \{0,1,2,\ldots,  \min(v_{ji} - {c}_r, \bar{b}_j)  \}$
\end{itemize}

\begin{proposition} \label{nopure} Generically (under feasible budget), the LUBA game with resubmission has no pure Bayes-Nash equilibria.
\end{proposition}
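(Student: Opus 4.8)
The plan is to argue by contradiction: assume a pure Bayes--Nash equilibrium $(s_j)_j$ exists and exhibit, for a fixed type of some bidder, a unilateral deviation that strictly raises that type's expected payoff. By Proposition \ref{domino} every undominated action lies in the finite set $\prod_{i}\{0,1,\dots,\min(v_{ji}-c_r,\bar b_j)\}$, so for each realized type profile the induced bid sets $(B_j(v_j))_j$, the unique-bid set $B_i^\ast$, and the winning bid $\inf B_i^\ast$ are well defined, and the total number of submitted bids is bounded by the budgets. The word \emph{generic} is what gets the argument started: under a feasible budget there is a positive-probability event of valuations $v_{ji}$ large enough that a single bid at value $1$ returns $v_{ji}-c_i-c_r>0$, so participation with a low unique bid is strictly attractive for a set of types of positive measure. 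The same hypothesis rules out the all-$\{0\}$ profile, which is the only pure-equilibrium candidate precisely in the excluded special cases.

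First I would establish the combinatorial mechanism on a single item $i$: for a fixed realization of types, one of three unilateral moves is profitable. Let $b^\ast=\inf B_i^\ast$, with $b^\ast=+\infty$ when $B_i^\ast=\emptyset$, and note that since total bids are finite there is always a smallest \emph{empty} value $b'$ occupied by nobody. (i) If there is no winner, a high type $k$ switching to the single bid $\{b'\}$ makes $b'$ the lowest unique bid and collects $v_{ki}-b'-c_i-c_r$, positive on the generic event, beating the non-positive no-win payoff. (ii) If a winner $j^\ast$ exists but some empty slot lies below $b^\ast$, even $j^\ast$ strictly gains by relocating its winning bid into that slot and winning at a lower price; hence in equilibrium every value $1,\dots,b^\ast-1$ must be \emph{blocked}, i.e.\ carry at least two bids. (iii) But blocking is exactly the anti-coordination trap of the lowest-unique rule: each value $1,\dots,b^\ast-1$ then hosts at least two bidders, and since the winner sits at $b^\ast$ at least one bidder at value $1$ is not winning; that bidder pays $c_i$ on a bid which, in this realization, never wins, so deleting it saves $c_i>0$ without forfeiting any win.

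The hard part is upgrading these realization-wise moves to an improvement of the \emph{expected} payoff of a fixed type, since a pure Bayes--Nash equilibrium only requires that $B_k(v_k)$ best-respond to the \emph{induced distribution} of opponents' profiles, and no single alternative dominates pointwise across realizations of $v_{-k}$: deleting a value-$1$ bid saves $c_i$ where that value is blocked but forfeits a win on realizations where it was the lone low bid. I therefore expect this expectation-signing step to be the real obstacle, and the plan is to choose the deviating type $v_k$ near $\bar v$ and compare $B_k(v_k)$ against the benchmark action $\{1\}$: value $1$ can never be undercut, the generic event assigns positive probability to $1$ being unique, and once the fee terms $c_i|B_{ki}|$ are accounted for the cheap single bid is never worse on the complementary event than a costly elaborate profile. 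This should yield $\mathbb{E}_{\{1\},s_{-k}} r_k(\cdot\mid v_k) > \mathbb{E}_{B_k(v_k),s_{-k}} r_k(\cdot\mid v_k)$, contradicting the equilibrium inequality.

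Finally I would collect the excluded cases. The chain uses both the existence of types for which a value-$1$ bid is strictly profitable and budgets large enough to place it; when either fails, for instance when registration plus submission fees exceed every admissible valuation or budgets admit only $\{0\}$, the all-$\{0\}$ profile is a pure equilibrium, and these are the \emph{very special cases} of the statement. Away from this measure-zero parameter set the trichotomy of step two always produces a profitable deviation, so no pure Bayes--Nash equilibrium exists, establishing Proposition \ref{nopure}. By contrast with the expectation-level comparison, the undercutting and blocking combinatorics are routine once the finite-action reduction of Proposition \ref{domino} is in hand.
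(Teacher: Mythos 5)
Your realization-wise trichotomy is essentially the paper's own argument, reorganized more carefully: the paper likewise argues that a bidder whose set contains no winning bid should shrink it to save resubmission fees, that the resulting sparse profiles such as $(\{0\},\{1\},\ldots)$ are destabilized by a deviator who blocks the low values and wins higher up, and that the presence of a winner pushes the same reasoning onto some loser. One local repair in your case (i): switching bidder $k$ to the singleton $\{b'\}$ deletes $k$'s other bids and may thereby turn a previously doubled value below $b'$ into a unique bid held by \emph{someone else}, so the deviator need not win; deviating instead to $B_{ki}\cup\{b'\}$ leaves all uniqueness statuses below $b'$ intact and makes $b'$ the lowest unique bid at the cost of one extra fee $c_i$.

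The more serious issue is the step you yourself flag as the real obstacle: lifting pointwise deviations to an improvement of a fixed type's \emph{expected} payoff. This is a genuine gap, and the fix you sketch does not close it. Your pointwise deviations differ across realizations of $v_{-k}$ (enter here, undercut there, delete a blocked bid elsewhere), so no single one of them contradicts the Bayes--Nash inequality directly, and your candidate uniform deviation --- comparing $B_k(v_k)$ to the benchmark $\{1\}$ for $v_k$ near $\bar v$ --- rests on the claim that the cheap single bid is never worse on the complementary event. That claim is false: on realizations where an opponent also occupies value $1$, the action $\{1\}$ loses outright (and may hand the win to a third party), whereas an elaborate action such as $\{1,2\}$ still wins at $2$; hence $\{1\}$ can be strictly worse in expectation than the putative equilibrium action, and the inequality you want need not hold. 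To be fair, the paper's own proof is also conducted realization by realization and never performs this lift either; but since you explicitly identified the expectation-signing step as the crux and the argument you propose for it is incorrect, the proof as proposed is incomplete at exactly that point.
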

	
\begin{proof}[Proof of Proposition \ref{nopure} ]
Let $c>0, \ n\geq 2$ and consider a pure behavioral strategy profile $B$  with a generic budget $\min(\bar{b}_j, v_{ji})>2+c+c_r.$
If $B_{ij}$ does not contain a winning bid $b_{ij}$ then $B_{ji}$ can be reduced by saving the resubmission cost, i.e.,  an empty set or $\{0\}.$ But if $j$ is not participating on item $i$ (while its budget allows) then there is another player $-j$  who could also save by decreasing its bid set. However, the action $a_0=(\{0\}, \{1\}, \ldots, )$ cannot be an equilibrium because player $j$ can deviate and bids $\{1234\ldots, k\}$ with $k$ higher than  the maximum bid in action $a_0.$ Hence, there always a player who can deviate and benefits if budget allows and if the value $v_{ji}$ is not reached in terms of bidding cost.

If $B_{ij}$ contains a winning bid $b_{ij},$  then there is another player $j'\neq j$ such that $B_{j'i}$ does not contain the winning bid and one can apply the reasoning above with $B_{j'i}.$
Iterating this for all items, we deduce that the action $B$ is not a best response to itself. Hence, $B$  cannot be a Bayes-Nash equilibrium. This completes the proof.
\end{proof}

Note however that there are trivial cases with pure equilibria:
\begin{itemize}\item if multiple resubmissions  are not allowed i.e., $|B_{ji}|\leq 1$ then the action profile $a_1=(\{0\},\ldots,\{0\},\{1\},\{0\},\ldots, \{0\})$ is an equilibrium profile of that item whenever the realized value is such that $v_{ji}>1+c_i+c_r.$
Putting together, the set of pure equilibria in this very special case game
$G(J,(\overline{b}_j)_{j\in J},(c_i)_{i\in \mathcal{I}},c_r,m,(F_j)_{j\in J})$ with  $|B_{ji}|\leq 1,$  $F_j$ is the cumulative distribution of $v_j,$
the set of action $A_j=D_0\cup D_1\cup D_2\cup \ldots\cup D_{\overline{b}_j},$ where $D_k=\{ d,\ d_1+\ldots+d_m= k, d_l\in \mathbb{N} \}$   the set of decomposition/partition of the number $k,$
is given by
\[
\begin{bmatrix}
    b_{11}       & b_{12} & b_{13} & \dots & b_{1m} \\
    b_{21}       & b_{22} & b_{23} & \dots & b_{2m} \\
    b_{31}       & b_{32} & b_{33} & \dots & b_{3m} \\
    \hdotsfor{5} \\
    b_{n1}       & b_{n2} & b_{n3} & \dots & b_{nm}
\end{bmatrix}
\]
where \begin{itemize}\item[(a)] $b_{ji}\in \{0,1\}, \ \forall (i,j)\in \mathcal{I}\times \mathcal{J}$
\item[(b)]  $\sum_{j=1}^n b_{ji}=1,\  \ \forall i\in \mathcal{I}$ and \item[(c)]
$\sum_{i=1}^m b_{ji}\leq \overline{b}_j,\ , \ \forall j\in  \mathcal{J}$\end{itemize}

\begin{figure}[!htb]
  \centering
    \includegraphics[width=0.97\textwidth]{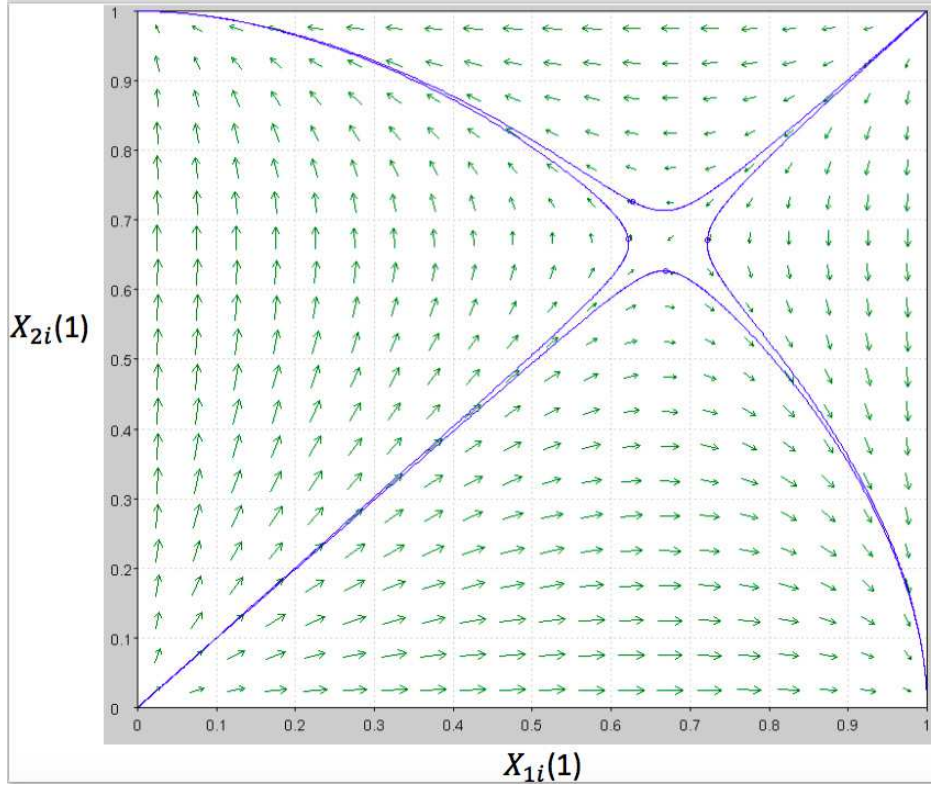}
  \caption{The vector field of the learning dynamics when $n=2,v=4,c=1.$ Convergence to equilibrium.} \label{vectorfield}
\end{figure}
 As shown in Figure \ref{vectorfield}, the learning algorithm always converges to one of pure Nash equilibria except when they are symmetric, however, the probability of symmetric situation is zero.

\item  If $v_{ji}\in [1+c+c_r, 2+c+c_r)$ it is not interesting to bid higher than $1$ because of negative payoff.
\item when the realized  value $v_{ji}$ is below $c+c_r,$ the action  $\{0\}$ is better for $j$ on $i.$
\end{itemize}

\subsubsection*{Non-potential game}
As a corollary of Proposition \ref{nopure} we deduce that LUBA is NOT a potential game because one can exhibit the following cycle of action profiles for item $i$

$(\{1\},\{0\},\ldots,\{0\})\mapsto (\{1\},\{12\},\ldots,\{0\})$
% $ \mapsto (\{1\},\{12\}, \{0\}\ldots,\{0\})$

 $ \mapsto (\{1\},\{12\}, \{123\}, \{0\}\ldots,\{0\})$

$\vdots$

$\mapsto (\{1\},\{12\}, \{123\}, \{1234\}\ldots,\{1234\ldots \bar{b}\})$

 $\mapsto (\{0\},\{12\}, \{123\}, \{1234\}\ldots,\{123\ldots \bar{b}\})$

 $\mapsto (\{0\},\{0\}, \{123\}, \{1234\}\ldots,\{123\ldots \bar{b}\})$

 $\mapsto (\{0\},\{0\}, \{0\}, \{1234\}\ldots,\{123\ldots \bar{b}\})$

 $\vdots$

$\ldots  (\{0\},\{0\}, \{0\}, \{0\}\ldots, \{0\},\{123\ldots \bar{b}\})$

$\mapsto (\{0\},\{0\}, \{0\}, \{0\}\ldots, \{0\},\{1\})$

$\mapsto (\{12\},\{0\}, \{0\}, \{0\}\ldots, \{0\},\{1\})$

$\mapsto (\{12\},\{0\}, \{0\}, \{0\}\ldots, \{0\},\{0\})$

$\mapsto (\{1\},\{0\}, \{0\}, \{0\}\ldots, \{0\},\{0\}),$

which is a finite (better-reply)  improvement cycle.

  The following result provides existence of equilibria in behavioral  mixed strategies.

\begin{proposition} \label{pro:resubmission}
The multi-item Bayesian LUBA game (with resubmission) but with arbitrary number of bidders has  at least one Bayes-Nash equilibrium in mixed strategies under budget restrictions. Moreover, at a (generic) mixed equilibrium, the expected payoff of the bidder is zero.
\end{proposition}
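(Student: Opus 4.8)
The plan is to split the statement into an existence part and a payoff-characterization part, and to reduce both to a finite Bayesian game by first invoking the domination argument. First I would use Proposition \ref{domino} to replace, for each bidder $j$ and each realized valuation vector $v_j$, the full bid space by the finite undominated set. In particular the third clause of Proposition \ref{domino} rules out any bid whose winning payment $b_{ji}\ind_{\{b_{ji}=\inf B_i^*\}}+c_i+c_r$ could exceed $\bar b_j$, so that every retained action remains budget-feasible regardless of the opponents' bids. This matters because, as written, the feasibility constraint contains the coupling term $[\inf B_i^*]\ind_{\{B_{ji}\cap[\inf B_i^*]\}}$ that depends on $B_{-j}$; after the domination step the admissible action set $\mathcal{B}_j(v_j,\bar b_j)$ becomes a fixed finite set not depending on the others, so that we genuinely have a Bayesian game with finitely many actions per type and type distributions $F_j$.

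For existence I would pass to behavioral (distributional) strategies. A strategy of bidder $j$ is a measurable map $v_j\mapsto s_j(v_j)\in\Delta(\mathcal{B}_j(v_j,\bar b_j))$, equivalently a probability measure on the pairs $(v_j,B_j)$ with prescribed marginal $F_j$ on the valuations. Because the action sets are finite and the stage payoff $r_j(B)$ depends on $v_j$ only through the linear term $(v_{ji}-b_{ji})\ind_{\{b_{ji}=\inf B_i^*\}}$, the expected payoff $\mathbb{E}_{s_j,s_{-j}}r_j$ is bounded, continuous in the profile for the weak-$*$ topology, and linear, hence quasiconcave, in $s_j$. The set of distributional-strategy profiles is nonempty, convex and weak-$*$ compact, the best-response correspondence is then nonempty, convex-valued and upper hemicontinuous, and the Kakutani-Fan-Glicksberg fixed-point theorem yields a profile that is a fixed point of the best-response map, i.e. a mixed Bayes-Nash equilibrium. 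Equivalently one may condition on the type, apply Nash's theorem to the finite game faced at each valuation, and assemble the type-contingent equilibria through a measurable selection.

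For the payoff characterization I would argue from the outside option together with the indifference principle. The action $\{0\}$ on every item is always admissible and yields payoff exactly $0$ irrespective of $B_{-j}$; hence every equilibrium gives bidder $j$ an expected payoff at least $0$. To obtain equality I would show that, generically, $\{0\}$ lies in the support of $j$'s equilibrium mixed strategy: by Proposition \ref{nopure} no equilibrium is pure, so $j$ must randomize, and a participative action that fails to be the lowest unique bid returns the strictly negative value $-c_r-|B_{ji}|c_i$, so to be retained in the support an action must win with a probability that exactly offsets its submission and registration cost. Balancing the marginal submission, where adding or deleting one bid changes the cost by $c_i$ and the win probability by the corresponding amount, forces the marginal value of participation to vanish at equilibrium, which is precisely the statement that the supported participative actions are payoff-indifferent to $\{0\}$. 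By the indifference condition every action in the support then yields the common value $0$, so the equilibrium payoff is $0$.

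The main obstacle I expect is this last step, namely certifying that $\{0\}$ actually belongs to the equilibrium support, equivalently that the equilibrium value of each bidder coincides with the outside-option value rather than lying strictly above it. The word \emph{generic} is doing real work here: I would make precise the measure-zero set of budget, cost and valuation-distribution parameters to be excluded, exactly those degenerate configurations (such as the no-resubmission case $|B_{ji}|\le 1$ producing the pure profile $a_1$) where a bidder can secure a strictly positive rent that competition fails to dissipate. Away from that exceptional set, the absence of pure equilibria from Proposition \ref{nopure} together with the rent-dissipation pressure created by costless non-participation should pin the value to zero; turning this intuition into an inequality that closes the gap from above is where the care is required.
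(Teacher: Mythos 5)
Your existence argument is essentially the paper's: the paper's entire proof reads ``by Proposition \ref{domino} the constrained game has a finite number of actions; by standard fixed-point theorem an equilibrium exists.'' You supply the detail the paper omits --- distributional strategies, weak-$*$ compactness, Kakutani--Fan--Glicksberg --- and that detail is genuinely needed, since the valuations live in a continuum and a fixed-point theorem on the finite action sets alone does not literally apply. (Your side remark about applying Nash's theorem ``to the finite game faced at each valuation'' is the one loose spot: the game a single type faces is not self-contained, because best responses depend on the opponents' behavior across all of their types; the distributional-strategy route you lead with is the one that actually works.) Your observation that the feasibility set $\mathcal{B}_j(v_j,\bar b_j)$ as written depends on $B_{-j}$ through $\inf B_i^*$, and must be decoupled before one can speak of a fixed finite action set per type, identifies a real issue the paper silently ignores, and your repair via the third clause of Proposition \ref{domino} is the right one.

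On the second claim --- zero expected payoff at a generic mixed equilibrium --- there is a genuine gap, which you flag yourself: the outside option $\{0\}$ gives the lower bound $\geq 0$ for free, but equality requires that $\{0\}$ (or some action earning exactly $0$) lie in the support of the equilibrium strategy, and this is never established. ``No pure equilibrium, hence randomization, hence indifference'' only shows that all supported actions earn the common equilibrium value, not that this value is $0$. The paper's own three-bidder example with $\bar b_2=\bar b_3=3$, where $(\{1\},\{0\},\{0\})$ is an equilibrium paying bidder $1$ the amount $v-3>0$, shows the claim genuinely fails outside the generic case, so a quantitative argument excluding such budget configurations cannot be avoided. You should know, however, that the paper's proof does not address this clause at all --- it proves only existence --- so your attempt, gap included, goes beyond the source; the only place the paper actually verifies the zero-payoff claim is the explicit two-bidder computation of Proposition \ref{tworesubmission}, where $y_0=c/(v-1)>0$ puts $\{0\}$ in the support and the indifference system pins the value to $0$, which is exactly the mechanism you describe but do not prove in general.
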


Proposition \ref{pro:resubmission} provides existence of at least one Bayes-Nash equilibrium. However, it does not tell us what are those equilibria.

\begin{proof}[Proof of Proposition \ref{pro:resubmission}]
By Proposition \ref{domino} the constrained game has a finite number of action. By standard fixed-point theorem,
the multi-item Bayesian LUBA game (with  resubmission) but with arbitrary number of bidders has  at least one Bayes-Nash equilibrium in mixed strategies under budget restrictions.

\end{proof}

Below we explicitly compute mixed equilibrium for two bidders with resubmission.

\begin{proposition} \label{tworesubmission}Let $c +1< v$ and $n=2, c>0.$
The game has a partially mixed equilibrium which is explicitly given by
$$y^*=(\frac{c}{v-1},\frac{c}{v-2},\ldots, \frac{c}{v-k}, 1-\sum_{l=0}^{k-1} \frac{c}{v-(l+1)},0,\ldots, 0)$$ where $k$ is the maximum  number such that  $\sum_{l=0}^{\min(\bar{b},\frac{v}{c})}
\frac{\tilde{c}}{v-l}<1.$
\end{proposition}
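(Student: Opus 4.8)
The plan is to collapse the two-bidder game into a one-dimensional problem over ``interval'' bids and then pin down the mixing weights by an indifference argument. First I would cut down the strategy space. By Proposition \ref{domino} every bid exceeding $v-c-c_r$ is dominated, so each bidder's relevant bid sets live inside $\{0,1,\ldots,\min(\bar b,\lfloor v/c\rfloor)\}$. The structural observation that drives everything (already visible in the improvement cycle $\{1\}\to\{12\}\to\{123\}\to\cdots$ displayed before the statement) is that, with two bidders, a bidder can only realize a \emph{low} unique winning bid by covering all lower levels: if bidder $1$ leaves some level $b'$ below its intended winning bid uncovered while the opponent bids there alone, then $b'$ becomes the lowest unique bid and bidder $1$ loses. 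I would therefore argue that the undominated, non-redundant pure strategies are exactly the down-closed intervals $S_j=\{1,2,\ldots,j\}$ for $j=0,1,\ldots$ (with $S_0=\{0\}$ meaning non-participation), reducing each action to the single integer $j$.

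Second, I would compute the interval-versus-interval outcome. If the bidders pick parameters $j_1\neq j_2$, the levels up to $\min(j_1,j_2)$ are matched and hence not unique, while the bidder with the larger interval owns every level strictly between them; that bidder wins at price $\min(j_1,j_2)+1$, and if $j_1=j_2$ nobody wins. Writing $y=(y_J)_J$ for the opponent's mixing over intervals, this yields, up to the fixed additive registration term, $U(j)=\sum_{J<j} y_J\,[\,v-(J+1)\,]-jc$, whose first difference telescopes cleanly to $U(j+1)-U(j)=y_j\,[\,v-(j+1)\,]-c$.

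Third, I would impose indifference on the support. Setting $U(j+1)=U(j)$ for consecutive indices gives the recursion $y_j=\dfrac{c}{v-(j+1)}$, and reading off components (the $i$-th entry being $y_{i-1}$) reproduces the stated values $\frac{c}{v-1},\frac{c}{v-2},\ldots$. Since non-participation $j=0$ carries payoff $0$ and sits in the support with weight $y_0=\frac{c}{v-1}$, the common equilibrium value is $0$, matching Proposition \ref{pro:resubmission}. The cutoff $k$ is then forced by feasibility: the $y_j$ are legitimate probabilities only while $v-(j+1)>0$ and the partial sums stay below $1$, with $j$ further bounded by budget ($jc\le\bar b$) and value ($j\le v/c$); I take $k$ to be the largest index for which $\sum_{l=0}^{\min(\bar b,\,v/c)}\frac{\tilde c}{v-l}<1$ and dump the residual mass $1-\sum_{l=0}^{k-1}\frac{c}{v-(l+1)}$ on the top interval, which is precisely $y^*$.

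Finally I would verify that $y^*$ best-replies to itself: indifference holds on $\{0,\ldots,k\}$ by construction; for $j>k$ the marginal gain $y_j[v-(j+1)]$ has fallen below $c$ (or the opponent places no mass there), so $U$ is non-increasing and such deviations are unprofitable; and any non-interval bid set is weakly dominated by the interval with the same maximum, since skipping a covered level can only hand the opponent a lower unique bid while saving at most the already-accounted bid cost. The hard part will be exactly this last reduction --- proving rigorously that the full lattice of arbitrary bid \emph{subsets} cannot strictly beat the interval strategies against $y^*$ --- together with the boundary bookkeeping at the truncation level $k$ (checking that the residual-mass entry preserves indifference, that deviations to $j>k$ remain unprofitable, and that the price and cost conventions, including $c_r$ and $\tilde c$, are tracked consistently through the telescoping sum).
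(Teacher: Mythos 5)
Your proposal is correct and follows essentially the same route as the paper: restrict attention to down-closed interval strategies $\{0,1,\ldots,l\}$, impose the indifference condition (common payoff $0$, anchored by non-participation), and difference consecutive payoff equations to obtain $y_l = \frac{c}{v-(l+1)}$ with the residual mass on the top interval. The only difference is that you explicitly flag the reduction from arbitrary bid subsets to intervals and the verification of unprofitability of deviations beyond $k$ as steps requiring proof, whereas the paper takes both for granted.
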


 \begin{proof}[Proof of Proposition  \ref{tworesubmission}]
Let $k$ be the largest integer such that $y_k=\mathbb{P}(\{0,1, . . . , k \}) > 0,$ $k\leq \bar{b}$. When  bidder 2's strategy is $y,$ the expected payoff of  bidder 1 when bidding $\{0,1,...,l\}$ is equal to zero in equilibrium due to the indifference condition, for each $l \in \{1, 2,\ldots, k\}$. The cost of such a bid is equal to  $l c$.

On the other hand, the expected gain can be computed as follows. With probability $y_0$, bidder 2 will not post any bid, the winning bid is 1, and the gain is thus  $v - 1$ for any $l\leq  k.$
For $1\leq l\leq k,$ the bidder 2 bids $\{0,1,...,l\}$ with probability $y_l$ and the winner bid is $l+1$ from bidder 1, and 1's gain will $v-(l+1).$
The expected payoff of bidder 1 when playing $\{0,...,l\}$ is therefore given by
$$\begin{array}{l}
\mbox{Action} \{0\}:\\
r_{1i}(\{0\}, y)=0\\
\mbox{Action} \{01\}:\\
r_{1i}(\{01\}, y)= (v-c-1)y_0 -c y_1 -c(y_2+\ldots+y_k),\\
\mbox{Action} \{012\}:\\
r_{1i}(\{012\}, y)= (v-2c-1)y_0 +(v-2c-2)y_1\\ -2c y_2 -2c(y_3+\ldots+y_k),\\
\ldots \\ \mbox{Action} \{012\ldots l\}:\\
r_{1i}(\{012\ldots l\}, y)= (v-lc-1)y_0 \\ +(v-lc-2)y_1\\ +\ldots +(v-lc-l)y_{l-1}\\ -lc y_l -lc(y_{l+1}+\ldots+y_k)\\ \mbox{Action} \{012\ldots (l+1)\}:\\
r_{1i}(\{012\ldots l+1\}, y)= (v-(l+1)c-1)y_0\\ +(v-(l+1)c-2)y_1\\ +\ldots +(v-(l+1)c-(l+1))y_{l}\\ -(l+1)c y_{l+1} -(l+1)c(y_{l+2}+\ldots+y_k)\\
\ldots \\ \mbox{Action} \{012\ldots k\}:\\
r_{1i}(\{012\ldots k\}, y)=  (v-kc-1)y_0+(v-kc-2)y_1\\ +\ldots+(v-kc-k)y_{k-1}-kc y_k.\\
y_l\geq 0,\ y_0+\ldots+y_k=1\\
 y_{k+s+1}=0\ \mbox{for}\ s\geq 0.
\end{array}$$

It turns out that

$$\left\{\begin{array}{c}
 (v-1)y_0 = c\\
 (v-1)y_0 +(v-2)y_1=2c \\
\ldots \\
(v-1)y_0 +(v-2)y_1+\ldots +(v-l)y_{l-1}=lc \\
 (v-1)y_0 +(v-2)y_1+\ldots +(v-(l+1))y_{l}=(l+1)c \\
\ldots \\
  (v-kc-1)y_0+(v-kc-2)y_1+\ldots+(v-kc-k)y_{k-1}=kc.
\end{array} \right.
$$

For $l$ between $1$ and $k-1$ we make the difference between line $l+1$ and line $l$ to get:
$$\left\{
\begin{array}{c}
 y_0 = \frac{c}{v-1}\\
y_1=\frac{c}{v-2} \\
\ldots \\
y_{l-1}=\frac{c}{v-l} \\
y_{l}=\frac{c}{v-(l+1)} \\
\ldots \\
 y_{k-1}=\frac{c}{v-k} \\
 y_k=1-(y_0+y_1+\ldots+y_{k-1}) >0\\
 y_{k+1+s}=0.
\end{array}
\right.
$$

Thus, the partially mixed strategy $$y^*=(\frac{c}{v-1},\frac{c}{v-2},\ldots, \frac{c}{v-k}, 1-\sum_{l=0}^{k-1}\frac{c}{v-(l+1)},0,\ldots, 0)$$ is an equilibrium strategy. The equilibrium payoff is zero. This completes the proof.

\end{proof}

Note that the framework can easily capture situations in which the number of potential participants can be unbounded. We introduce the statistics of the bidding data as $n_{i,b}$ which is the number of bidders who place $b$
on item $i.$ The random matrix  $(n_{i,b})_{i,b}$  contains enough information that will allow any player to compute its payoff. Therefore one can work directly on bid statistics $(n_{i,b})_{i,b}.$ The knowledge of the total number of participants is not required.  The payoff function of $j$ on item $i$ is $[-c_r-c_i |B_{ji}|-(v_{ji}-b_{ji})\ind_{\{ b_{ji}=\inf B_i^*\}}]\ind_{\{  B_{ji}\neq \{0\}\}}.$
The dependence on the mean-field term $(n_{i,b})_{i,b}$ is expressed as following:
$b_{ji}\in B_i^*$ if and only if  $b_{ji}$ is the smallest non-zero bid such that $n_{i,b_{ji}}=1.$

\subsection{Revenue of the auctioneers}
We now investigate how much money the online platform can make by running multi-item LUBA. Since the platform will be running for a certain time before the auction ends, each bidder is facing a
a random number of other  bidders, who may bid in a stochastic strategic way. Their valuation is not known. We need to estimate the set of bids $B_i$ and the bid values on item $i.$
We denote by $n_{ib}$  the random number of bidders who bid on $b.$ If $n_{i,b}$ follows a Poisson distribution with parameter $\lambda_{i,b}$, and that all variables $n_{i,b}$ are independent. The value $\lambda_{i,b}$ is assumed to be non-decreasing with $b.$ The expected payoff of the seller on item $i$ is
$\sum_{j}c_r \ind_{B_{ji}\neq \{0\}}+ \sum_{b}\mathbb{E}n_{i,b} c_i +\mathbb{E} \inf B_i^*  - v_i.$
\begin{proposition}\label{resubmissionauctioneer}
Let $\lambda_{i,b}=\frac{v_i}{c_i} \frac{1}{(1+b)^{z}},$ with $z>0.$
The expected revenue of the seller on item $i$ is  $\sum_{j}c_r +\mathbb{E} \inf B_i^* + v_i [\sum_{b} \frac{1}{(1+b)^{z}}],$
exceeds the value $v_i$ of the item $i$  for small value of $z$ whenever  $$\sum_{b\leq \min(\bar{b}, \frac{v_i}{c_i})} \frac{1}{(1+b)^{z}} > 1 -\frac{\sum_{j}c_r +\mathbb{E} \inf B_i^*}{v_i}$$
\end{proposition}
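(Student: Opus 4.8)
The plan is to compute the three contributions to the seller's revenue separately and then combine them; the only genuinely probabilistic ingredient is the submission-fee term $\sum_b \mathbb{E} n_{i,b}\, c_i$. Since each $n_{i,b}$ is Poisson with parameter $\lambda_{i,b}$, I would first invoke $\mathbb{E} n_{i,b} = \lambda_{i,b}$, substitute the prescribed form $\lambda_{i,b} = \frac{v_i}{c_i}(1+b)^{-z}$, and observe that the factor $c_i$ cancels:
$$\sum_b \mathbb{E} n_{i,b}\, c_i = \sum_b \frac{v_i}{c_i}\frac{1}{(1+b)^z}\, c_i = v_i \sum_b \frac{1}{(1+b)^z}.$$
This is exactly the third summand of the claimed revenue expression, so no separate evaluation of the Poisson variance or higher moments is needed.

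Next I would assemble the full expected revenue by adding the registration fees $\sum_j c_r$ — taking, or upper-bounding by, the case in which every solicited bidder participates so that the indicators $\ind_{\{B_{ji}\neq\{0\}\}}$ are active — and the expected winning bid $\mathbb{E}\inf B_i^*$, which I carry along as an unevaluated nonnegative quantity. This gives
$$R_i = \sum_j c_r + \mathbb{E}\inf B_i^* + v_i \sum_b \frac{1}{(1+b)^z},$$
matching the statement. To obtain the threshold I would require the gross revenue $R_i$ to exceed the auctioneer's own value $v_i$ of the item, i.e.\ $R_i > v_i$, which is precisely the condition under which the seller's net payoff $R_i - v_i$ is positive. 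Dividing by $v_i>0$ and rearranging isolates the bid-fee sum as $\sum_b (1+b)^{-z} > 1 - (\sum_j c_r + \mathbb{E}\inf B_i^*)/v_i$. Finally, invoking Proposition \ref{domino} I would truncate the summation index to the feasible range $b \leq \min(\bar{b}, v_i/c_i)$, since no rewarding bid exceeds this budget- and value-imposed bound; this recovers the stated inequality verbatim.

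The qualifier ``for small value of $z$'' is then justified by a limiting observation rather than a fresh computation: on the finite feasible index set each term $(1+b)^{-z}\to 1$ as $z\to 0^+$, so the left-hand sum converges to the number of feasible bids $\min(\bar{b}, v_i/c_i)$. Provided this count is at least one (which holds under the standing assumption that the item's value exceeds the per-bid and registration costs, the analogue of $c+1<v$), the left side can be driven above the right side, the latter being strictly less than $1$ because $\sum_j c_r + \mathbb{E}\inf B_i^* > 0$. The calculation itself is routine; the only points requiring care are the truncation of the a priori infinite bid sum to the constrained support supplied by Proposition \ref{domino}, and reading ``revenue'' as gross money collected, so that the relevant comparison is against $v_i$ rather than against zero.
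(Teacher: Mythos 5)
Your proposal follows essentially the same route as the paper's own proof: compute $\mathbb{E} n_{i,b}=\lambda_{i,b}$, cancel the $c_i$, assemble the revenue as $\sum_j c_r+\mathbb{E}\inf B_i^*+v_i\sum_b(1+b)^{-z}$, and rearrange the condition $R_i>v_i$. If anything you are more thorough than the paper, which omits both the truncation of the bid sum to the feasible range and the limiting argument justifying the ``small $z$'' qualifier.
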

\begin{proof}[Proof of Proposition \ref{resubmissionauctioneer}]

The expected payoff of the seller on item $i$ is equal to $\sum_{j}c_r + \sum_{b}\mathbb{E}n_{i,b} c_i +\mathbb{E} \inf B_i^*  - v_i.$ As we assume that $n_{i,b}$ follows a Poisson distribution with parameter $\lambda_{i,b}$ and that all $n_{i,b}$ are independent, we can calculate that $\mathbb{E}n_{i,b}=\lambda_{i,b}=\frac{v_i}{c_i} \frac{1}{(1+b)^{z}}.$ Rewriting  the expected payoff of the auctioneer, we  get that it is equal to $\sum_{j}c_r + \sum_{b}\mathbb{E}n_{i,b} c_i +\mathbb{E} \inf B_i^*  - v_i.$
Then, we can easily induce the condition for the expected revenue of the seller exceeds the value of $v_i$ is
 $$\sum_{b\leq \min(\bar{b}, \frac{v_i}{c_i})} \frac{1}{(1+b)^{z}} > 1 -\frac{\sum_{j}c_r +\mathbb{E} \inf B_i^*}{v_i}$$

\end{proof}

Note that in all-pay auction (with resubmission cost) if the probability to win is $P_j(B)=\frac{1}{n}$ (uniform lottery) then the revenue of the seller is  $\sum_{j}c_r \ind_{B_{ji}\neq \{0\}}+ \sum_{b}\mathbb{E}n_{i,b} c_i  - v_i.$ Thus, the difference between the two schemes is the expected winning bid $\mathbb{E} \inf B_i^*$ if the action profile generates similar distribution $n_{i,b}.$

Proposition \ref{pro:resubmission} provides existence of at least one Bayes-Nash equilibrium. However, it does not tell us how to learn or to reach  those equilibria.
Below we provide a learning procedure for equilibria.

\subsection{Learning Algorithm}

 \label{sec:5}

Imitative learning is very important in many applications
\cite{mastercourselearning} including cloud networking, power grid, security \& reliability, information dissemination and evolution of protocols and technologies.
It has been successfully used to capture  animal behavior as well as human learning.

Based on  randomly disturbed games \`a la Harsanyi \cite{harsanyi} who  showed that the equilibria of a game are limit points of sequences
of $\epsilon-$Nash  as the parameter $\epsilon$ vanishes, the imitative learning follows the same line as a perturbed payoff-based scheme. So, when trying to learn
the equilibria of a game it makes sense to consider imitative Boltzmann-Gibbs strategy with
a small  parameter $\epsilon.$  The number $\frac{1}{\epsilon}$ is sometimes interpreted as a rationality level of the player. Small $\epsilon$ is therefore seen as big rationality level.

 Surprisingly, it also makes sense to consider imitative Boltzmann-Gibbs with big parameter $\epsilon$ because
when the   parameter $\epsilon$ approaches infinity, the limiting of the imitative logit dynamics is the so-called
 ``imitate the better'' dynamics  \cite{weibull}.
 The actions that are initially not present in the support of the strategy will never be tried with imitative dynamics,
so we always consider initial points in the interior of the strategy space.

{\it What is the information-theoretic function associated to the imitative Boltzmann-Gibbs (iBG) strategy}?
The information-theoretic metric behind the imitative Boltzmann-Gibbs learning is the relative entropy. This is an important connection between imitative  learning and information theory since the relative entropy covers the mutual information, information gain and the Shannon entropy as particular case. We introduce relative entropy as a cost of moves in the LUBA games. One can interpret relative entropy as a cost of moves in the  learning process.
 Specially, the next Proposition \ref{propoimitative} shows that
 the imitative Boltzmann-Gibbs strategy (or imitative logit strategy, i-logit) is the maximizer of the perturbed payoff
$r_j(s'_j,s_{-j})-\epsilon_j d_{KL}(s'_j,s_j)$ where $d_{KL}$ is the relative entropy from strategy $s_{j}$ to  $s'_{j}$
 (also called Kullback-Leibler divergence)  and $\frac{1}{\epsilon_j}$ is the rationality level of player $j,$
$d_{KL}(s'_j,s_j)=-\sum_{i=1}^{+\infty}s'_{ji}\log_2 \left(\frac{s'_{ji}}{s_{ji}}\right).$

The instant payoff of bidder $j$ on item $i$ at time/round $t$ is a realized value of
$$R_{ji}^t=r_{ji}( B^t)+\eta_{ji}^t $$ where $\eta_{ji}^t$ is an  observation/measurement noise.
Bidder $j$ wins  on item $i$ at time $t$ if its bid  $b_{ji,t}$ is the lowest unique bid on item $i$ and $B_{ji,t}$ is feasible in terms of the available budget of $j$ at time $t.$ $R_{ji,t}=0$ if bidder $j$ does not participate to item $i.$
This algorithm describes how to update the reward and the strategy in the bid space. Let $\hat{R}_{ji,t}(B_{j})$ be the estimation of reward corresponding the $j$-th bidder to $i$-th item at round $t$ if she decides the bid set corresponding to the index of $B_j,$ and $\hat{R}_{j,t}(B_{j})=\sum_{i}\hat{R}_{ji,t}(B_{j})$
\begin{algorithm}
%\vspace{-2mm}
\caption {The proposed update reward learning algorithm}
\label{algorithm: 001}
\textbf{Initialization:} Estimate reward on item $i$ at round zero.
$\hat{R}_{ji,0}(0), \hat{R}_{ji,0}(1), \ldots,\hat{R}_{ji,0}(\bar{b}(0))\sim uniform$

\textbf{For Round $t+1$} \\
For every bidder and every item:\\
%For the winner

$\hat{R}_{j,t+1}(B_j)=\hat{R}_{j,t}(B_j)+\ind_{\{B_{j,t}=B_j\}}\alpha_{j}^t(R_{j,t}-\hat{R}_{j,t}({B_j}))$\\
%For the loser $\hat{R}_{ji}^{t+1}(b_{k})=\hat{R}_{ji}^{t}(b_{k})-C\ind_{\{b_{ji}^t=b_{k}\}}$\\

\textbf{Update  S:} $s_{j,t+1}(B_j)=s_{j,t}(B_j)(1+\lambda_{j,t})^{\hat{R}_{j,t}(B_j)}$ \\
\textbf{Normalize S:}
\end{algorithm}
%Denote by $\hat{R}_{j}^{t}(B_j)$ is the estimate of $R_{j}^t(k).$

 Let $s_{j,t}$ be a vector in the relative interior of the $(|\mathcal{B}_j|-1)$-dimensional simplex of $\mathbb{R}^{|\mathcal{B}_j|}.$
 This means that the cost to move for player $j,$ from $s_{j,t}$ to $s_{j,t+1}$ is given by
$$c(s_{j,t+1},s_{j,t})=\epsilon_{j,t} \sum_{B_j\in \mbox{supp}(s_{j,t+1})}s_{j,t+1}(B_j) \ln\left(\frac{s_{j,t+1}(B_j)}{s_{j,t}(B_j)}\right),$$ where $\epsilon_{j,t}>0.$ This cost is added to the expected estimated payoff which is $\langle s_{j,t+1}, \hat{r}_{j,t}\rangle.$ Thus, we associate the following problem to player $j:$
$$\max_{s_{j,t+1}}\left[\langle s_{j,t+1}, \hat{r}_{j,t}\rangle- c(s_{j,t+1},s_{j,t}) \right].$$
%%%%%%%%%%%%%%%%%%%%%%%%%%%%%%%%%%%%%%%%%%%%%%%%%%%%%%%%%%%%%%%%%%%%ùùùù

\begin{eqnarray}
\hat{R}_{j,t}&=&  \max_{B_j}\hat{r}_{j,t}(B_j)
= \max_{s_{j,t+1}\in \mathbb{P}(\mathcal{B}_j)} \langle \hat{r}_{j,t}, s_{j,t+1}\rangle
\end{eqnarray}
We introduce the Legendre-Fenchel transform of the relative entropy function
\begin{eqnarray}\label{eqtsrta}
W_{j,\epsilon_j}&=& \max_{s_{j,t+1}\in  \mathbb{P}(\mathcal{B}_j)} \tilde{W}_{j,\epsilon_j}(s_{j,t+1}| s_{j,t},\hat{r}_{j,t})
\end{eqnarray} where
$\tilde{W}_{j,\epsilon_j}(s_{j,t+1}| s_{j,t},\hat{r}_{j,t}):=\left[\langle \hat{r}_{j,t}, s_{j,t+1}\rangle-\epsilon_{j,t} \sum_{B_j\in \mbox{supp}(s_{j,t+1})}s_{j,t+1}(B_j) \ln(\frac{s_{j,t+1}(B_j)}{s_{j,t}(B_j)})\right],$
$\epsilon_{j,t}$ is a positive parameter. $W_{j,\epsilon_j}$ corresponds to the best expected estimated  payoff with cost of learning.
\begin{proposition} \label{propoimitative}
The following statements hold: \\

(i) The strategy
$
s_{j,t+1}(B_j)=\frac{s_{j,t}(B_j)e^{\frac{ \hat{r}_{j,t}(B_j)}{\epsilon_{j,t}}}}{\sum_{B'_j} s_{j,t}(B'_j)e^{\frac{\hat{r}_{j,t}(B'_j) }{\epsilon_{j,t}}}}
$
is the optimal strategy solution to (\ref{eqtsrta}).

(ii)
The Lagrange multiplier associated to (\ref{eqtsrta}) is given by
$$
\nu_j=\epsilon_{j,t}\left[ -1+ \ln\left(\sum_{B_j} s_{j,t}(B_j)e^{\frac{\hat{r}_{j,t}(B_j) }{\epsilon_{j,t}}}\right)\right],
$$
and the optimal value is
$$
W_{j,\epsilon_j}=\epsilon_{j,t} \ln\left(\sum_{B_j} s_{j,t}(B_j)e^{\frac{\hat{r}_{j,t}(B_j) }{\epsilon_{j,t}}}\right),\ \epsilon_{j,t}>0.
$$
\end{proposition}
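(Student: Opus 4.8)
The plan is to treat (\ref{eqtsrta}) as a strictly concave maximization of a Gibbs free-energy functional over the probability simplex and to solve it by Lagrangian duality, thereby identifying the Boltzmann-Gibbs strategy as the Legendre-Fenchel conjugate of the relative entropy. First I would observe that the objective $\tilde{W}_{j,\epsilon_j}$ splits into a linear part $\langle \hat{r}_{j,t}, s_{j,t+1}\rangle$ and the term $-\epsilon_{j,t}\, d_{KL}(s_{j,t+1},s_{j,t})$. Since the relative entropy is strictly convex in its first argument and $\epsilon_{j,t}>0$, the functional $\tilde{W}_{j,\epsilon_j}$ is strictly concave; because the simplex $\mathbb{P}(\mathcal{B}_j)$ is compact, a maximizer exists and is unique. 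As $s_{j,t}$ lies in the relative interior of the simplex, every $s_{j,t}(B_j)>0$, so the logarithmic term acts as a barrier that forces the optimizer to stay strictly positive. Hence the non-negativity constraints are inactive and only the normalization constraint $\sum_{B_j} s_{j,t+1}(B_j)=1$ is binding.

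Next I would form the Lagrangian
$$L = \langle \hat{r}_{j,t}, s_{j,t+1}\rangle - \epsilon_{j,t}\sum_{B_j} s_{j,t+1}(B_j)\ln\frac{s_{j,t+1}(B_j)}{s_{j,t}(B_j)} - \nu_j\Big(\sum_{B_j} s_{j,t+1}(B_j)-1\Big),$$
and impose $\partial L/\partial s_{j,t+1}(B_j)=0$. This yields the stationarity condition $\hat{r}_{j,t}(B_j)-\epsilon_{j,t}\big(\ln\frac{s_{j,t+1}(B_j)}{s_{j,t}(B_j)}+1\big)-\nu_j=0$, which I solve for $s_{j,t+1}(B_j)$ to obtain $s_{j,t+1}(B_j)=s_{j,t}(B_j)\exp\big(\frac{\hat{r}_{j,t}(B_j)-\nu_j}{\epsilon_{j,t}}-1\big)$, i.e. exactly the claimed Boltzmann-Gibbs form up to the normalizing constant.

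Imposing $\sum_{B_j} s_{j,t+1}(B_j)=1$ then pins down that constant via $e^{-\nu_j/\epsilon_{j,t}-1}=1/\sum_{B_j'} s_{j,t}(B_j')e^{\hat{r}_{j,t}(B_j')/\epsilon_{j,t}}$, which simultaneously gives the explicit strategy of part (i) and, after taking logarithms, the multiplier $\nu_j$ of part (ii). For the optimal value I substitute the maximizer back into $\tilde{W}_{j,\epsilon_j}$: writing $Z=\sum_{B_j} s_{j,t}(B_j)e^{\hat{r}_{j,t}(B_j)/\epsilon_{j,t}}$ and using the stationarity identity in the form $\epsilon_{j,t}\ln\frac{s_{j,t+1}(B_j)}{s_{j,t}(B_j)}=\hat{r}_{j,t}(B_j)-\epsilon_{j,t}\ln Z$, the linear payoff term and the entropy term cancel, leaving $W_{j,\epsilon_j}=\epsilon_{j,t}\ln Z$, the log-sum-exp expression asserted in (ii).

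The only genuinely delicate point is justifying that this interior stationary point is the global maximizer rather than merely a critical point. Here I would lean on the strict concavity of $\tilde{W}_{j,\epsilon_j}$ together with the fact that $s_{j,t}\in \mathrm{ri}\,\mathbb{P}(\mathcal{B}_j)$ keeps the optimizer off the boundary, so the KKT stationarity condition is both necessary and sufficient. Everything else is the routine Legendre-Fenchel computation identifying the convex conjugate of the relative entropy as the log-sum-exp function.
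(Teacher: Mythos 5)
Your proposal is correct and follows essentially the same route as the paper's proof: a Lagrangian/KKT argument on the simplex exploiting strict concavity, solving the stationarity condition for the Boltzmann--Gibbs form, normalizing to identify $\nu_j$, and substituting back to obtain $W_{j,\epsilon_j}=\epsilon_{j,t}\ln Z$. Your explicit justification that the entropy term acts as a barrier keeping the optimizer in the relative interior (so only the normalization constraint binds) is a welcome bit of extra care that the paper leaves implicit.
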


 \begin{proof}[Proof of Proposition \ref{propoimitative}]
 Let $\epsilon_{j,t}>0.$
The  function
$$s_{j,t+1}\longmapsto -\epsilon_{j,t}\sum_{B_j\in \ \mbox{support}(s_j)}s_{j,t+1}(B_j) \log_2(s_{j,t+1}(B_j))-constant$$  is strictly concave (the Hessian matrix
with entries $h_{ii}<0$ and $h_{ij}=0$) and the domain (simplex) is convex. Thus, the Karush-Kuhn-Tucker (KKT) conditions are necessarily and sufficient for the problem in (\ref{eqtsrta}). Using KKT conditions, one has
$$ \hat{r}_{j,t}(B_j)+\epsilon_{j,t}\ln  s_{j,t}(B_j)-\epsilon_{j,t} (1+\ln s_{j,t+1}(B_j))-\nu_j=0,$$  where $\nu_j$ is the Lagrange multiplier associated to the simplex equality equation of player $j$.
It follows that
$$
\frac{\hat{r}_{j,t}(B_j)-\nu_j}{\epsilon_{j,t}}-1=\ln\left( \frac{s_{j,t+1}(B_j)}{s_{j,t}(B_j)} \right)
$$

Taking the exponential yields
$$
s_{j,t+1}(B_j)=\frac{s_{j,t}(B_j)e^{\frac{\hat{r}_{j,t}(B_j)}{\epsilon_{j,t}}}}{e^{1+
\frac{\nu_j}{\epsilon_{j,t}}}}
$$
Summing over the set of actions, one gets
$\sum_{B'_j}s_{j,t+1}(B'_j)=1$ which implies that
$$
e^{1+\frac{\nu_j}{\epsilon_{j,t}}}=\sum_{B'_j} s_{j,t+1}(B'_j)e^{\frac{\hat{r}_{j,t}(B'_j)}{\epsilon_{j,t}}}
$$

Thus, the optimal strategy is
$$
s_{j,t+1}(B_j)=\frac{s_{j,t}(B_j)e^{\frac{ \hat{r}_{j,t}(B_j)}{\epsilon_j}}}{\sum_{B'_j} s_{j,t}(B'_j)e^{\frac{\hat{r}_{j,t}(B'_j) }{\epsilon_j}}}.
$$

The Lagrange multiplier is
$$
\nu_j=\epsilon_{j,t}\left[ -1+\ln\left(\sum_{B_j} s_{j,t}(B_j)e^{\frac{\hat{r}_{j,t}(B_j)}{\epsilon_{j,t}}}\right)\right],
$$
and
$$
W_{j,\epsilon_j}=\nu_j+\epsilon_{j,t}=\epsilon_{j,t}\ln\left(\sum_{B_j} s_{j,t}(B_j)e^{\frac{\hat{r}_{j,t}(B_j)}{\epsilon_{j,t}}}\right)
$$
\end{proof}

The next Proposition specifies the error bound in terms of the parameter $\epsilon_{j,t}.$

\begin{proposition} \label{progap}
There exists $C_{|\mathcal{B}_j|}>0$ such that
\begin{eqnarray}
|W_{j,\epsilon_j}-\max_{B_j} \hat{r}_{j,t}(B_j)|=|W_{j,\epsilon_j}-\hat{R}_{j,t}| \leq C_{|\mathcal{B}_j|}\epsilon_{j,t}
\end{eqnarray}
Moreover,  $C_{|\mathcal{B}_j|}\leq \ln\left(|\mathcal{B}_j|\right)+\max_{B_j}|-\ln(s_{j,t}(B_j))|.$
%%%%%%%%%%%%%%%%%%%%%%%%%%%%%%%%%%%%%%%%%%%%
\end{proposition}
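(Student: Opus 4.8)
The plan is to treat $W_{j,\epsilon_j}$ as a log-sum-exp (soft-max) functional and sandwich it between $\hat{R}_{j,t}=\max_{B_j}\hat{r}_{j,t}(B_j)$ and a nearby lower bound, using only the closed form $W_{j,\epsilon_j}=\epsilon_{j,t}\ln\!\left(\sum_{B_j}s_{j,t}(B_j)e^{\hat{r}_{j,t}(B_j)/\epsilon_{j,t}}\right)$ established in Proposition \ref{propoimitative}(ii), together with the fact that $s_{j,t}$ lies in the relative interior of the simplex, so that $s_{j,t}(B_j)>0$ for every $B_j$ and $\sum_{B_j}s_{j,t}(B_j)=1$.

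First I would establish the upper bound $W_{j,\epsilon_j}\le \hat{R}_{j,t}$. Since $\hat{r}_{j,t}(B_j)\le \hat{R}_{j,t}$ for every $B_j$, one has $e^{\hat{r}_{j,t}(B_j)/\epsilon_{j,t}}\le e^{\hat{R}_{j,t}/\epsilon_{j,t}}$; multiplying by $s_{j,t}(B_j)\ge 0$, summing, and using $\sum_{B_j}s_{j,t}(B_j)=1$ gives $\sum_{B_j}s_{j,t}(B_j)e^{\hat{r}_{j,t}(B_j)/\epsilon_{j,t}}\le e^{\hat{R}_{j,t}/\epsilon_{j,t}}$. Applying $\epsilon_{j,t}\ln(\cdot)$, which is monotone because $\epsilon_{j,t}>0$, yields $W_{j,\epsilon_j}\le \hat{R}_{j,t}$, i.e. $W_{j,\epsilon_j}-\hat{R}_{j,t}\le 0$.

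Next I would get the matching lower bound by discarding all but the maximizing term. Let $B_j^\star$ attain $\hat{r}_{j,t}(B_j^\star)=\hat{R}_{j,t}$. Keeping only this term, $\sum_{B_j}s_{j,t}(B_j)e^{\hat{r}_{j,t}(B_j)/\epsilon_{j,t}}\ge s_{j,t}(B_j^\star)e^{\hat{R}_{j,t}/\epsilon_{j,t}}$, hence $W_{j,\epsilon_j}\ge \hat{R}_{j,t}+\epsilon_{j,t}\ln s_{j,t}(B_j^\star)$. Since $0<s_{j,t}(B_j^\star)\le 1$, we have $\ln s_{j,t}(B_j^\star)=-|\ln s_{j,t}(B_j^\star)|\ge -\max_{B_j}|\ln s_{j,t}(B_j)|$, so $W_{j,\epsilon_j}-\hat{R}_{j,t}\ge -\epsilon_{j,t}\max_{B_j}|\ln s_{j,t}(B_j)|$.

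Combining the two inequalities sandwiches the gap as $-\epsilon_{j,t}\max_{B_j}|\ln s_{j,t}(B_j)|\le W_{j,\epsilon_j}-\hat{R}_{j,t}\le 0$, whence $|W_{j,\epsilon_j}-\hat{R}_{j,t}|\le \epsilon_{j,t}\max_{B_j}|\ln s_{j,t}(B_j)|$; setting $C_{|\mathcal{B}_j|}=\max_{B_j}|\ln s_{j,t}(B_j)|$ proves the estimate, and since $\ln(|\mathcal{B}_j|)\ge 0$ the stated bound $C_{|\mathcal{B}_j|}\le \ln(|\mathcal{B}_j|)+\max_{B_j}|-\ln s_{j,t}(B_j)|$ follows a fortiori. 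If instead one prefers a distribution-free upper direction, bounding $s_{j,t}(B_j)\le 1$ inside the sum gives $W_{j,\epsilon_j}-\hat{R}_{j,t}\le \epsilon_{j,t}\ln|\mathcal{B}_j|$, and combining this with the lower bound reproduces the looser constant $\ln|\mathcal{B}_j|+\max_{B_j}|\ln s_{j,t}(B_j)|$ exactly. There is no genuine obstacle beyond careful sign bookkeeping; the only point demanding attention is that the interior hypothesis on $s_{j,t}$ keeps $\ln s_{j,t}(B_j^\star)$ finite, so the constant degrades only as a component of $s_{j,t}$ approaches the boundary of the simplex.
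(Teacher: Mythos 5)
Your proof is correct, and it takes a slightly different (and in fact sharper) route than the paper. The paper's own proof is a one-line appeal to the variational definition (\ref{eqtsrta}): since $W_{j,\epsilon_j}=\max_{s'}\left[\langle \hat{r}_{j,t},s'\rangle-\epsilon_{j,t}\sum_{B_j}s'(B_j)\ln\frac{s'(B_j)}{s_{j,t}(B_j)}\right]$ and the penalty term decomposes as $-H(s')-\sum_{B_j}s'(B_j)\ln s_{j,t}(B_j)$, bounding the Shannon entropy by $\ln|\mathcal{B}_j|$ and the cross term by $\max_{B_j}|-\ln s_{j,t}(B_j)|$ gives the gap estimate with the stated constant. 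You instead work entirely from the closed-form log-sum-exp expression of Proposition \ref{propoimitative}(ii) and sandwich it: the normalization $\sum_{B_j}s_{j,t}(B_j)=1$ gives $W_{j,\epsilon_j}\le\hat{R}_{j,t}$, and retaining only the maximizing term gives $W_{j,\epsilon_j}\ge\hat{R}_{j,t}+\epsilon_{j,t}\ln s_{j,t}(B_j^\star)$. This buys you the cleaner constant $C_{|\mathcal{B}_j|}=\max_{B_j}|\ln s_{j,t}(B_j)|$, with no $\ln|\mathcal{B}_j|$ term at all, which trivially satisfies the bound asserted in the proposition; the paper's variational route is less explicit but generalizes more directly to other penalty functionals. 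Both arguments correctly identify the real hypothesis doing the work, namely that $s_{j,t}$ is in the relative interior of the simplex so that $\ln s_{j,t}(B_j^\star)$ is finite, and your closing remark that the constant degrades as $s_{j,t}$ approaches the boundary is exactly the right caveat (and is the reason the paper's passing claim that the relevant penalty is ``between $0$ and $\ln|\mathcal{B}_j|$'' is, taken literally, insufficient without the extra $\max_{B_j}|-\ln s_{j,t}(B_j)|$ term that the proposition's constant already includes).
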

\begin{proof}[Proof of Proposition \ref{progap}]
The proof follows immediately from the definition of $W_{j,\epsilon_j}$ and the fact that the entropy function of player $j$ is between $0$ and $\ln |\mathcal{B}_j|.$
\end{proof}

\begin{proposition} \label{progap2} The following results hold:
\begin{itemize}\item {\bf High rationality regime: }
 $$\lim_{\epsilon \longrightarrow 0}W_{j,\epsilon}=\max_{B_j}\ \hat{r}_j(B_j)$$
\item {\bf Low rationality regime: }
 $$\lim_{\epsilon \longrightarrow \infty}W_{j,\epsilon}= \langle s_{j,t}, \hat{r}_{j,t} \rangle$$
\end{itemize}
%%%%%%%%%%%%%%%%%%%%%%%%%%%%%%%%%%%%%%%%%%%%
\end{proposition}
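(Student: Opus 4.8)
The plan is to work directly from the closed-form expression for $W_{j,\epsilon}$ established in Proposition \ref{propoimitative}(ii), namely
$$W_{j,\epsilon}=\epsilon \ln\left(\sum_{B_j} s_{j,t}(B_j)e^{\hat{r}_{j,t}(B_j)/\epsilon}\right),$$
which is a log-sum-exp (soft-max) in the bid estimates $\hat{r}_{j,t}$. Both limits then reduce to a standard asymptotic analysis of this expression as $\epsilon$ tends to each extreme.

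For the high-rationality regime $\epsilon \longrightarrow 0$, the quickest route is to invoke Proposition \ref{progap}, which already gives the bound $|W_{j,\epsilon}-\max_{B_j}\hat{r}_{j,t}(B_j)|\leq C_{|\mathcal{B}_j|}\,\epsilon$; letting $\epsilon\longrightarrow 0$ yields the claim immediately. Alternatively, I would factor out the maximal payoff $M:=\max_{B_j}\hat{r}_{j,t}(B_j)$ and write $W_{j,\epsilon}=M+\epsilon\ln\big(\sum_{B_j} s_{j,t}(B_j)e^{(\hat{r}_{j,t}(B_j)-M)/\epsilon}\big)$. Here every exponent is nonpositive, so the inner sum stays bounded between the weight of the maximizers (which is strictly positive, by the relative-interior assumption on $s_{j,t}$) and $1$; hence $\epsilon\ln(\cdot)\longrightarrow 0$ and $W_{j,\epsilon}\longrightarrow M$.

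For the low-rationality regime $\epsilon \longrightarrow \infty$, I would Taylor-expand the exponential for the small argument $\hat{r}_{j,t}(B_j)/\epsilon$. Since $e^{x}=1+x+O(x^2)$ and $\sum_{B_j}s_{j,t}(B_j)=1$ (because $s_{j,t}$ is a probability measure), the argument of the logarithm becomes $1+\langle s_{j,t},\hat{r}_{j,t}\rangle/\epsilon+O(1/\epsilon^2)$. Applying $\ln(1+u)=u+O(u^2)$ and multiplying through by $\epsilon$ then gives $W_{j,\epsilon}=\langle s_{j,t},\hat{r}_{j,t}\rangle+O(1/\epsilon)$, and passing to the limit produces the stated mean payoff.

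The only point requiring genuine care, and the step I would treat as the main obstacle, is justifying the uniform control of the remainder terms in the second expansion: the finiteness of $\mathcal{B}_j$ (guaranteed by Proposition \ref{domino}) bounds the estimates $\hat{r}_{j,t}(B_j)$, so the $O(1/\epsilon^2)$ errors are genuinely uniform across the finite sum and both expansions of $e^{x}$ and $\ln(1+u)$ are valid for all large $\epsilon$. With that bound in hand, both limits follow at once.
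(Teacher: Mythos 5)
Your proof is correct and follows essentially the same route as the paper: the high-rationality limit is obtained by invoking the error bound of Proposition \ref{progap}, and the low-rationality limit by analyzing $\epsilon\ln\bigl(\sum_{B_j}s_{j,t}(B_j)e^{\hat{r}_{j,t}(B_j)/\epsilon}\bigr)$ as $\epsilon\to\infty$. The only cosmetic difference is that you evaluate the second limit by a Taylor expansion of the log-sum-exp in the variable $1/\epsilon$, whereas the paper substitutes $\epsilon'=1/\epsilon$ and applies L'H\^opital's rule; both amount to computing the same derivative at zero, and your remark that finiteness of $\mathcal{B}_j$ makes the remainder uniform is the right justification.
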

\begin{proof}[Proof of Proposition \ref{progap2}]

The first limit follows from Proposition \ref{progap}. We now prove the second limit.
By changing the variable $\epsilon'=\frac{1}{\epsilon}$ one gets that $\epsilon'$ goes to zero and the limit becomes
\begin{eqnarray}
\lim_{\epsilon'\longrightarrow 0} \
\frac{\ln\left(\sum_{B_j} s_{j}(B_j)e^{\epsilon'\hat{r}_{j}(B_j)}\right)}{\epsilon'}\\ = \nonumber
\lim_{\epsilon'\longrightarrow 0} \
\frac{\left(\sum_{B_j} s_{j}(B_j)\hat{r}_{j}(B_j)e^{\epsilon'\hat{r}_{j}(B_j)}\right)}{\sum_{B_j} s_{j}(B_j)e^{\epsilon'\hat{r}_{j}(B_j)}}\\
=\sum_{B_j} s_{j}(B_j)\hat{r}_{j}(B_j)
\end{eqnarray}
by Hospital's rule. This completes the proof.
\end{proof}

It is important to notice that when $\epsilon_j$ goes to zero then the rationality level   $\frac{1}{\epsilon_j}$ tends infinity and player $j$ gets closer to the maximum payoff $\max_{B_j}\ \hat{r}_j(B_j).$
This result also says that when $\epsilon$ goes to infinity, one gets the expected payoff. Hence it gives a stationary point of  the replicator equation. We retrieve a well-known result in evolutionary process which states   that the evolution of phenotypes and genes has a tendency  NOT to  maximize the fitness but the {\it mixability} of system which is somewhat captured with the mixture  term $ \langle s_{j,t}, \hat{r}_{j,t} \rangle.$

\subsubsection*{Relationship with replicator dynamics}
Following \cite{anor} the scaled stochastic process from $s$ follows a replicator dynamics given by

$$
\dot{s}_{j}(B_j)=s_{j}(B_j)[ \mathbb{E}\hat{R}_{j}(B_j) -\sum_{B'_j} s_{j}(B'_j)\mathbb{E}\hat{R}_{j}(B'_j) ],\
$$
when the learning rate  $ (\lambda_{ji},\alpha_{ji}) ,$ which is random matrix,   vanishes.

%\subsubsection{  }

%\subsection{Convergence to pure equilibria}

\subsection{Second Learning Algorithm}
The complexity of the bid space in the lowest unique bid system is $O(2^N)$. Take $5\$$ for an example, the bid space of LUBA with resubmission is $2^500$ when the unit is 1 cent. This limits considerably the usability of the proposed algorithm. There is a need for reducing the curse of dimensionality or complexity. We utilize Monte Carlo to approximate the Nash-Equilibrium. The basic idea is that we utilize information provided by the system to guide the behavior of bidders, and then the frequency of winner's output is utilized as the approximation of Nash-Equilibrium.

During the bidding process, after placing a bid, the following information is provided by the system to the bidder.
\begin{enumerate}
\item Currently, whether the bid $k$ wins or not.
\item If not, the reason is:
\begin{enumerate}
\item $k$ is non-unique.
\item $k$ is too high.
\end{enumerate}

\end{enumerate}

\begin{algorithm}
%\vspace{-2mm}
\caption {The proposed Monte-Carlo algorithms}
\label{algorithm: 001}
\textbf{Initialization:}
\begin{enumerate}
\item For bidder $j$, initial $\delta_k^0=\frac{1}{\overline{b}_j}$
\item Generate bid $k$ from $\delta_k^{t-1}$
\item Based on the system output information, update $\delta_k^t$
\begin{enumerate}
\item $k$ is non-unique: $\delta_k^t=0$. $\delta_{l\not\in|K|}^t=\delta_{l\not\in|K|}^{t-1}+\frac{\delta_{k}^{t-1}}{\overline{b}_j-|K|+1}$
\item $k$ is too-high: $\delta_k^t=0$
    $\delta_{l\not\in|K|l\leq k}^t=$$\delta_{l\not\in|K|l\leq k}^t-1+\frac{\delta_{k}^{t-1}}{k-|K|+1}$
\end{enumerate}
\end{enumerate}
\textbf{For Round $t+1$} \\
For every bidder and every item:\\
%For the winner

$\hat{R}_{j,t+1}(B_j)=\hat{R}_{j,t}(B_j)+\ind_{\{B_{j,t}=B_j\}}\alpha_{j}^t(R_{j,t}-\hat{R}_{j,t}({B_j}))$\\
%For the loser $\hat{R}_{ji}^{t+1}(b_{k})=\hat{R}_{ji}^{t}(b_{k})-C\ind_{\{b_{ji}^t=b_{k}\}}$\\

\textbf{Update  S:} $s_{j,t+1}(B_j)=s_{j,t}(B_j)(1+\lambda_{j,t})^{\hat{R}_{j,t}(B_j)}$ \\
\textbf{Normalize S:}
\end{algorithm}

\subsection{Pareto optimality and global optima for bidders}
Pareto optimality (PO)  for bidders', is a configuration in which  it is not possible to make any one bidder better off without making at least one bidder worse off.
\begin{proposition}
Let $ v_{ji} > c+c_r+1.$
\begin{eqnarray}
S_2&=& \{ (B_{ji})_{j,i}\ | \  \forall i,\  n_{i,1}=1,\ n_{i,b}=0, \ \forall b\geq 2 \nonumber \\
&&  (B_{ji})_i\in \mathcal{B}_j \}.
\end{eqnarray}
Then, any profile in  $S_2$  is a Pareto optimal solution and the global payoff of the bidders at these PO is
$$
 \sum_{i=1}^m(-c_r - c+ v_{ji}-1)\ind_{j | b_{ji}=1=\inf B_i^*}.
$$

The global optimum (GO) payoff of the bidders consists to select the maximum value $\max_{j}v_{ji}$ for each item and place 1 cent for that bidder if its budget allows to do so.

\end{proposition}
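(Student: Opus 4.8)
The plan is to proceed in three stages: first pin down the payoff of every bidder at an arbitrary profile of $S_2$, then establish Pareto optimality through a utilitarian (total-welfare) argument, and finally read off the global optimum as the welfare-maximizing assignment. Throughout I would write $c$ for the per-bid cost $c_i$, as in the statement.

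First I would evaluate the payoffs. Fix a profile in $S_2$. By definition $n_{i,1}=1$ and $n_{i,b}=0$ for every $b\ge 2$, so on item $i$ the only strictly positive bid ever placed is a single bid of value $1$, submitted by a unique bidder $w(i)$. Hence $B_i^*=\{1\}$, $\inf B_i^*=1$, and $w(i)$ wins with $|B_{w(i),i}|=1$. Substituting into $r_{ji}=v_{ji}-|B_{ji}|c-\inf B_i^*-c_r$ gives $r_{w(i),i}=v_{w(i),i}-c-1-c_r$, while every other bidder abstains on item $i$ and collects $0$. Summing over items yields the announced global payoff $\sum_{i}(v_{ji}-c-1-c_r)\ind_{\{j\,:\,b_{ji}=1=\inf B_i^*\}}$, and the hypothesis $v_{ji}>c+c_r+1$ makes each winning term strictly positive.

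For Pareto optimality I would pass through the total welfare $\sum_j r_j=\sum_i W_i$, with $W_i=\sum_j r_{ji}$. The key computation is that, at an arbitrary feasible profile, if item $i$ has a winner $w(i)$ then $W_i=v_{w(i),i}-\inf B_i^*-c\sum_j|B_{ji}|-c_r\,|\{j:B_{ji}\neq\{0\}\}|$, whereas $W_i\le 0$ if there is no winner. Minimizing the three cost terms with $\inf B_i^*\ge 1$ and $\sum_j|B_{ji}|\ge|\{j:B_{ji}\neq\{0\}\}|\ge 1$ gives $W_i\le \max_j v_{ji}-c-1-c_r$, attained exactly when the top-valuation bidder places the single bid $1$ and no one else participates. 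Since a profile maximizing $\sum_j r_j$ is automatically Pareto optimal (any Pareto improvement would strictly raise the sum, contradicting maximality), the top-valuation assignment is Pareto optimal. The global-optimum claim then follows because $\sum_i W_i$ is separable across items, so it is maximized item by item by letting the bidder achieving $\max_j v_{ji}$ win with one $1$-cent bid, subject to that bidder's budget in $\mathcal{B}_j$ remaining feasible---whence the caveat ``if its budget allows,'' which binds only when one bidder is top-valuer for enough items to exhaust $\bar{b}_j$.

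The hard part, and the step I would scrutinize most, is upgrading Pareto optimality from the welfare-maximizing profile to \emph{every} profile of $S_2$. The welfare shortcut certifies only the assignment in which each item goes to its highest valuer; a profile of $S_2$ in which some item $i$ is won by a non-top bidder has strictly smaller $W_i$, hence strictly smaller total welfare, so ``welfare-max $\Rightarrow$ PO'' no longer applies and one must argue directly. A purely per-item deviation argument does go through---to make an abstaining bidder strictly better on item $i$ she must win it, but the incumbent $1$-cent bid can only be beaten by duplicating it (destroying uniqueness, so nobody wins and the incumbent is hurt) or by a redundant higher bid (a certain loss for the deviator). This local reasoning is, however, defeated by \emph{coordinated cross-item} deviations that swap winners across several items at once, trading a lost item for a more valuable won one; such a swap can strictly help several bidders simultaneously. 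Controlling this coupling is the genuine obstacle, and I expect the clean conclusion to require either restricting to symmetric per-item valuations (where every assignment yields the same $W_i$, so all of $S_2$ is welfare-maximal and the argument closes) or stating Pareto optimality for the top-valuation representative of $S_2$, which is precisely the global optimum.
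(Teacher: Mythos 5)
The paper offers no argument here at all (it states only that ``the proof is immediate''), so your computation of the payoffs at a profile of $S_2$ and your welfare-based certification of the top-valuation assignment already go beyond what the paper provides; both of those steps are correct. More importantly, the obstacle you flag at the end is not merely a technical worry about your method --- it is a genuine counterexample to the proposition as stated. Take $m=2$ items, $n=2$ bidders, valuations $v_{11}=v_{22}=5$ and $v_{12}=v_{21}=100$, with $5>c+c_r+1$ and ample budgets. The profile in which bidder $1$ wins item $1$ and bidder $2$ wins item $2$, each with a single $1$-cent bid, lies in $S_2$ and gives each bidder $5-c-c_r-1$. Swapping the two winners is again feasible (each bidder still pays one registration, one submission and a $1$-cent winning bid) and gives each bidder $100-c-c_r-1$, a strict improvement for both. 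Hence not every profile of $S_2$ is Pareto optimal; only the welfare-maximizing (top-valuation) profiles are, exactly as your utilitarian argument certifies. Your per-item deviation reasoning is sound but, as you correctly observe, it only rules out unilateral improvements (i.e.\ it shows profiles of $S_2$ are stable against single deviators), which is a Nash-type property, not Pareto optimality. So the correct conclusion is the one you reach in your last sentence: either one restricts to valuations that are symmetric across bidders per item, or one restricts $S_2$ to the profiles where each item is won by its highest valuer; under the proposition's literal hypotheses the claim fails, and your proof correctly isolates why.
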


The proof is immediate.

The inefficiency gap between total bidders' payoff at mixed Bayes-Nash equilibrium (which is $0$)  and GO payoff is
$$
 \sum_{i=1}^m(-c_r -c+ \max_{j}v_{ji}-1)\ind_{b_{ji}=1=\inf B_i^*}.
$$

\subsection{Numerical Investigation}
In this subsection we introduce the experiment setting for the learning equilibria of LUBA  game.

% In order to analyze the numerical convergence of the learning scheme we introduce root mean square error ˆRMSE ‰between two sequential round strategies.
%$$RMSE_{t,t-1}=\sqrt{\sum_{j\in \mathcal{J}}\sum_{k=0}^{\bar b_{j}(t)}(X_{ji}^{t}(k)-X_{ji}^{t-1}(k))^2}$$
\subsubsection{Illustration of Proposition \ref{tworesubmission}: 2 bidders for 1 item }
First we introduce the simulation of Proposition \ref{tworesubmission}. Table  \ref{tableeset} shows the experiment setting. We observe from Figure \ref{Nashequ1} that the algorithm provides a satisfactory result for approaching the mixed Bayes-Nash equilibrium distribution.
\begin{table}[ht]
\caption{Two bidders - One item under budget restriction}
\begin{center}
\begin{tabular}{|c|c|c|c|} \hline
     $n$       & 2          &  $m $       &  1   \\ \hline
$\overline{b}$ & 6          & $ v  $      &  8    \\ \hline
  $\alpha$     & 0.5        & $\lambda$   &  0.1   \\ \hline
  Assumption   & Symmetric  & Assumption  & Static Budget\\ \hline
\end{tabular}
\end{center}
\label{tableeset}
\end{table}%
%Figure \ref{Nashequ1}

\begin{figure}[!htb]
  \centering
    \includegraphics[width=10cm]{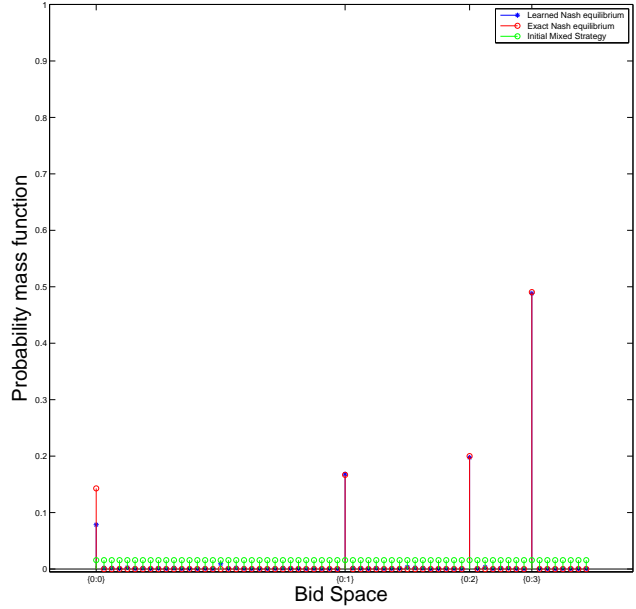}
  \caption{Probability mass function of 2 bidders and 1 Item in the auction system with resubmission. The blue asterisks signs present learned probability mass functions of bidders by the proposed algorithm in final iteration, the red circle signs present the theoretical Nash equilibrium provided by Proposition \ref{tworesubmission},  the green circle signs present the initial mixed strategy assigned to bidders in the beginning of learning process.} \label{Nashequ1}
\end{figure}
\subsubsection{Two players for one item (asymmetric situation)}
Let $n=2, m=1, \bar{b}_2>\bar{b}_1, v_2>v_1$ The action profile of bidder 1 is $\{\{0\},\{0,1\},\{0,1,2\}\}$ and that of bidder 2 is $\{\{0\},\{0,1\},\{0,1,2\},\{0,1,2,3\}\}.$ Registration fee $c_r=1$ and submission cost $\ c=1$ The parameters assumption in this simulation conveys different bidders have different valuation target same item. Table \ref{tabasy1} represents the payoffs matrix of bidder 1 and 2. The row player is player 1 and the column player is player 2. The ceil list rewards of each bidder and bidder 1 come first.
\begin{table}[ht]
\tiny{
\caption{Two bidders - One item under budget restriction under asymmetric situation}
\begin{center}
\begin{tabular}{|c|c|c|c|c|} \hline
            & \{0\}       & \{01\}       & \{012\}     & \{0123\}     \\ \hline
  \{0\}     & $0,0$     & $0,v-c-1$   & $0,v-2c-1$  & $0,v-3c-1$    \\ \hline
  \{01\}   & $v-c-1, 0$ & $-c,-c$     & $-c,v-2c-2$ & $-c,v-3c-2$    \\ \hline
  \{012\} & $v-2c-1, 0$& $v-2c-1,-c$ & $-2c,-2c$   & $-2c,v-3c-3$    \\ \hline
\end{tabular}
\end{center}
\label{tabasy1}
}
\end{table}%
Let the mixed strategy of bidder 1  be $s_1=x={x_0, x_1, x_2}$ and that of bidder 2 $s_2=y={y_0,y_1,y_2,y_3}$ . According to the indifferent condition we can derive the mixed Nash equilibrium is $x=(1-\frac{c}{v-2}-\frac{c}{v-3},\frac{c}{v-2},\frac{c}{v-3}); y=(\frac{c}{v-1},\frac{c}{v-2},\frac{c}{v-3},1-\frac{c}{v-1}-\frac{c}{v-2}-\frac{c}{v-3})$.
We now  analyze the mixed strategy of bidder 1. According to the indifferent condition, we can derive the following set of equations.
$$\begin{array}{l}
\mbox{Action} \{0\}:\\
r_{1}(\{0\}, x)=0\\
\mbox{Action} \{01\}:\\
r_{1}(\{01\}, x)= (v-c-1)x_0 -c(x_1+x_2),\\
\mbox{Action} \{012\}:\\
r_{1}(\{012\}, x)= (v-2c-1)x_0 +(v-2c-2)x_1-2cx_2\\
\mbox{Action} \{0123\}:\\
r_{1}(\{0123\}, y)= (v-3c-1)x_0 +(v-3c-2)x_1 \\+(v-3c-3)x_3
\end{array}$$
As the action $\{0123\}$   provides an expected payoff which is strictly higher than $0$ obtained with the action $\{0\}.$ The action $\{0\}$ is not  in the  support of the mixed Nash equilibrium. Assume the rewards of action $\{0,1\},\{0,1,2\}$ and $\{0,1,2,3\}$ are equal, then we can derive the Nash equilibrium of bidder 1 is $x=(1-\frac{c}{v-2}-\frac{c}{v-3},\frac{c}{v-2},\frac{c}{v-3}); $ The Nash equilibrium of bidder 2 can be derived utilizing the same method.
\subsubsection{Three bidders - One item}
Let $n=3, m=1, \bar{b}_1=5, \ \bar{b}_2=\bar{b}_3=3, \ c=c_r=1.$ Table \ref{taby1} represents the payoffs of bidder 1. The bid profile of  bidders 2 and 3 are displayed in the column. Bidder 1 choice is a row of the matrix.
%
%\begin{table}[ht]
%\caption{Three bidders - One item under budget restriction $\bar{b}_1=5, \ \bar{b}_2=\bar{b}_3=3, $ registration fee $c_r=1$ and submission cost $\ c=1.$}
%\begin{center}
%\begin{tabular}{|c|c|c|c|c|} \hline
% & \{0\} \{0\} & \{1\} \{0\}  & \{0\} \{1\}  & \{1\} \{1\} \\ \hline
%  \{0\}& 0&  &  & \\ \hline
%    \{1\}& $(v-c-c_r-1)^*$&  $-c-c_r$&  $-c-c_r$& $-c-c_r$\\ \hline
%      \{2\}& $v-c-c_r-2$&   $-c-c_r$&   $-c-c_r$& $v-c-c_r-2$\\ \hline
%       \{ 3\}& $v-c-c_r-3 $&  $-c-c_r$&  $-c-c_r$& $v-c-c_r-3$\\ \hline
%          \{12\}& $v-2c-c_r-1$ & $v-2c-c_r-2$ &  $v-2c-c_r-2$&$v-2c-c_r-2$ \\ \hline
%            \{13\}& $v-2c-c_r-1$&  $-2c-c_r$&  $-2c-c_r$& $-2c-c_r$ \\ \hline
%              \{23\}& $v-2c-c_r-2$& $-2c-c_r$ & $-2c-c_r$ & $v-2c-c_r-2$\\  \hline
%                \{123\}& $v-3c-c_r-1$&  $-3c-c_r$& $-3c-c_r$ &$-3c-c_r$ \\ \hline
%
%\end{tabular}
%\end{center}
%\label{taby1}
%\end{table}%

\begin{table}[ht]
\caption{Three bidders - One item under budget restriction $\bar{b}_1=5, \ \bar{b}_2=\bar{b}_3=3, $ registration fee $c_r=1$ and submission cost $\ c=1.$}
\begin{center}
\begin{tabular}{|c|c|c|c|c|} \hline
 & \{0\} \{0\} & \{1\} \{0\}  & \{0\} \{1\}  & \{1\} \{1\} \\ \hline
  \{0\}& 0& 0 &  0& 0\\ \hline
    \{1\}& $(v-3)^*$&  $-2$&  $-2$& $-2$\\ \hline
      \{2\}& $v-4$&   $-2$&   $-2$& $v-4$\\ \hline
       \{ 3\}& $v-5 $&  $-2$&  $-2$& $v-5$\\ \hline
          \{12\}& $v-4$ & $v-5$ &  $v-5$&$v-5$ \\ \hline
            \{13\}& $v-4 $&  $-3$&  $-3$& $-3$ \\ \hline
              \{23\}& $v-5$& $-3$ & $-3$ & $v-5$\\  \hline
                \{123\}& $v-5$&  $-4$& $-4$ &$-4$ \\ \hline

\end{tabular}
\end{center}
\label{taby1}
\end{table}%

If $v-c-c_r-1=v-3<0$ then player 1 will not participate because the value of the item is below the cost to get it.
Due to the unfeasible budget from bidders 2 and 3 who cannot bid 2 cents, the pure action $(\{1\}, \{0\},\{0\})$ is a Bayes-Nash equilibrium if $v-c-c_r-1=v-3\geq 0.$
If $v-2c-c_r-2=v-5>0$ then bidder 1 can guarantee a positive payoff by playing the action $\{12\}.$ Therefore, player $1$ will participate for sure because her action $\{12\}$ dominates $\{0\}$ in that case.

From Figure \ref{PureNash}, which is the corresponding simulation results, we can conclude that the learning algorithms converged to the Nash equilibrium described in Table \ref{taby1}.
\begin{figure}[!htb]
  \centering
    \includegraphics[width=0.9\textwidth]{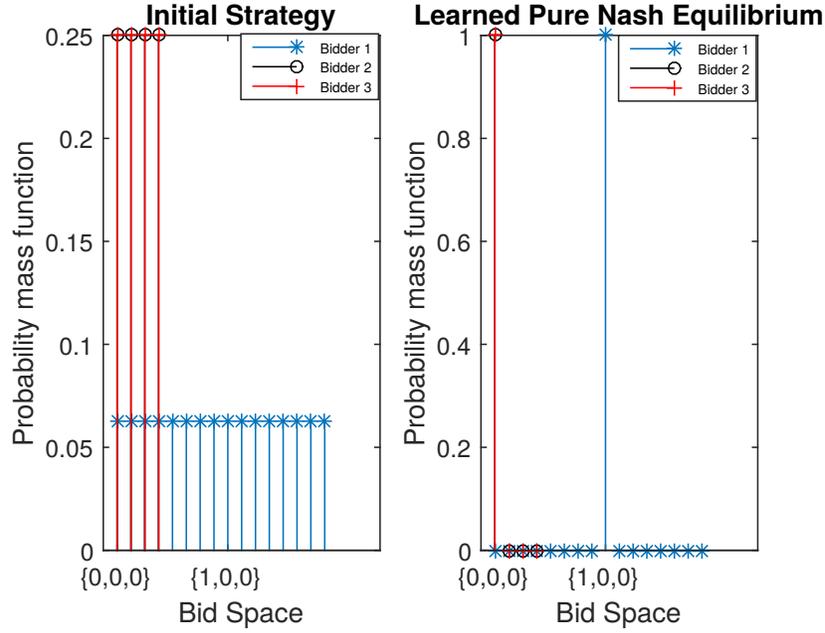}
  \caption{Probability mass function of 3 players and 1 Items in the auction system with resubmission and budget constraint. The blue asterisks present Bidder 1's strategy. The black circles present Bidder 2's strategy and the read plus sign present Bidder 3's strategy.} \label{PureNash}
\end{figure}

\subsubsection{Four bidders - Two items}

\begin{table}[!htb]
\caption{Summary of Multi-Items Experiment Setting } \label{tableAss11}
\begin{center}
\begin{tabular}{ll}
\hline
   Symbol & Setting\\ \hline
   $n$ & 4  \\
   $m$ & 2\\
   $\alpha$ & 0.001\\
   $\lambda$&0.01\\
   $c$ & 1\\
   Resource for item 1 and 2 & [2000 1500]\\
   Initial of Budget & [100 120 80 90]\\
   Initial of $\hat{R}_{ji}^{0}$ & 0.0001\\
   $v$ for $ j\in \{1,2,3,4\}$&[40,102],[42,109],[38,100],[36,110]\\
  \hline

\end{tabular}
\end{center}
\end{table}
\begin{figure}[!htb]
  \centering
    \includegraphics[width=0.9\textwidth]{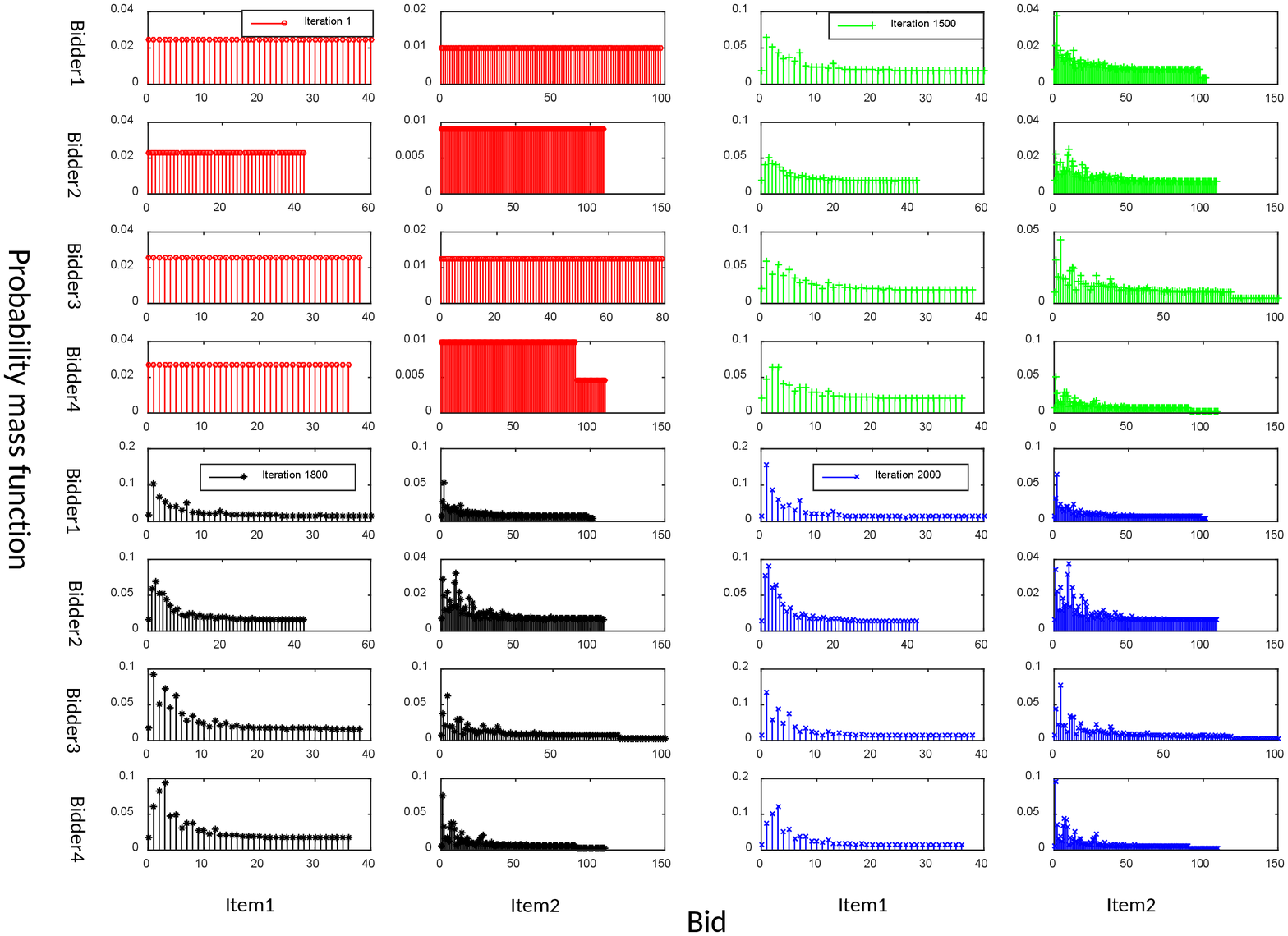}
  \caption{Probability mass function of 4 players and 2 Items in the auction system with budget update. The red circles present learned probability mass functions of each bidder target each item by the proposed algorithm in first iteration, the green plus sign present corresponding results in 1000th iteration, the black asterisks present results in 1800th iteration and the blue crosses present results in 2000th iteration.} \label{Nashequ3}
\end{figure}

\subsubsection{Impact of Parameters}
We investigate the impact of parameters in the proposed learning algorithm described.Table \ref{TablePara} shows experiment setting details.   In order to analyze the numerical convergence of the learning scheme we introduce root mean square error ˆRMSE ‰between two sequential round strategies.
$$RMSE_{t,t-1}=\sqrt{\sum_{j\in \mathcal{J}}\sum_{B_j}\sum_{i}(s_{ji,t}({B_j})-s_{ji,t-1}(B_j))^2}$$
Figure \ref{Paralem} and \ref{Paraalf} present the statistic properties of RMSE evolution with the bid round obtained by the proposed learning algorithm. Compared to $\alpha=0.1$ and $\lambda=0.1$, the results shows a quickly converged property. The plot shows that the large parameter setting results less outliers in the experiment results. According to the results in Figure \ref{Paralem} and \ref{Paraalf}, the parameter $\alpha$ and $\lambda$ influence the convergence of the proposed algorithm equally and a large $\alpha$ can reduce disturbance and outliers more effectively.

\begin{table}[htb]
\caption{ Parameters Impact Investigation} \label{TablePara}
\begin{center}
\begin{tabular}{lll}
\hline
   Symbol & Original Setting & Compared Setting\\ \hline
   $\alpha$ & 0.1 &1\\
   $\lambda$&0.1 &1\\
   \hline
\end{tabular}
\end{center}
\end{table}

\begin{figure}[!htb]
  \centering
    \includegraphics[width=0.9\textwidth]{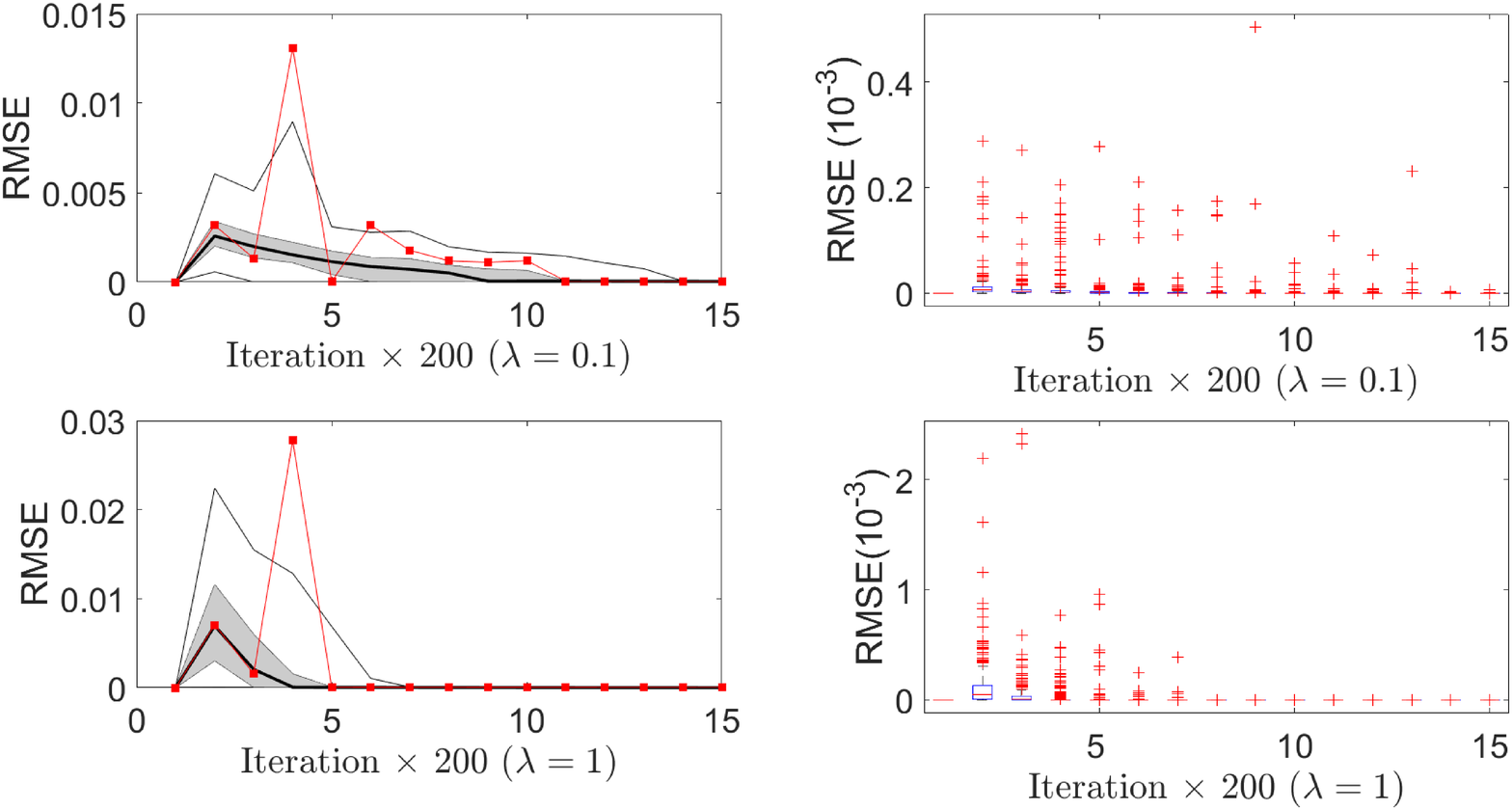}
     \caption{Statistic information on RSME on strategies probability distribution according to the experiment setting in the Table\ref{TablePara} . In the Figure \textbf{(a)} and \textbf{(c)}, the red curve with solid squares is the RMSE of 110th repeat experiment, the solid black curve is the median, the gray-shade area corresponding to the region between the percentiles P25 and P75, and the external bounding curves are the percentiles P5 and P95.\textbf{(b)} and \textbf{(d)} Box plot of RMSE on strategies probability distribution. On each box, the red central mark is the median, the edges of each box are the 25th and 75th percentage, the whiskers extend to the most extreme datapoints which are not considered to be outliers, and the outliers are plotted individually in the figure.}
  \label{Paralem}
\end{figure}
\begin{figure}[!htb]
  \centering
    \includegraphics[width=0.9\textwidth]{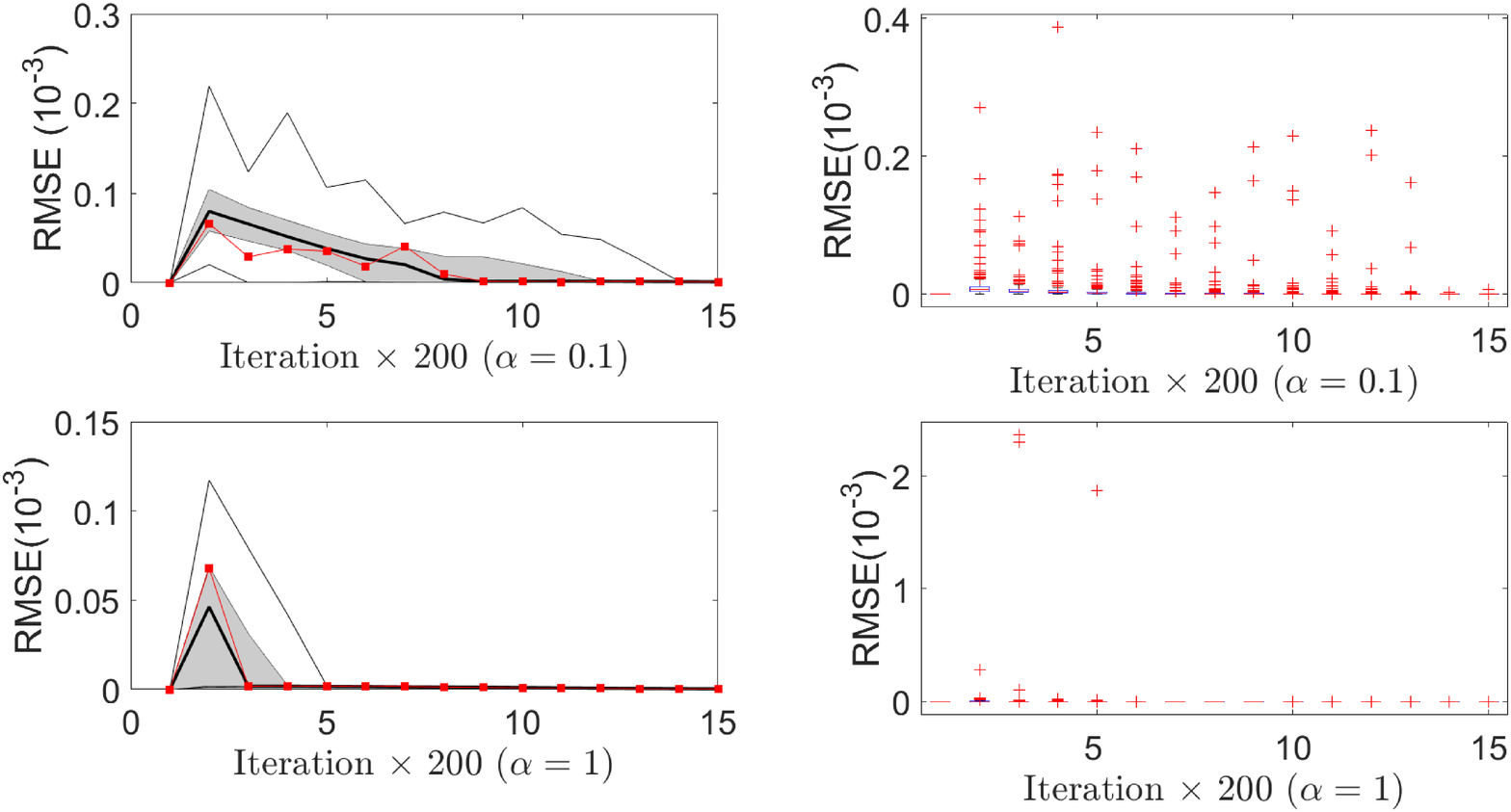}
     \caption{Statistic information on RSME on strategies probability distribution according to the experiment setting in the Table\ref{TablePara} . In the Figure \textbf{(a)} and \textbf{(c)}, the red curve with solid squares is the RMSE of 110th repeat experiment, the solid black curve is the median, the gray-shade area corresponding to the region between the percentiles P25 and P75, and the external bounding curves are the percentiles P5 and P95.\textbf{(b)} and \textbf{(d)} Box plot of RMSE on strategies probability distribution. On each box, the red central mark is the median, the edges of each box are the 25th and 75th percentage, the whiskers extend to the most extreme datapoints which are not considered to be outliers, and the outliers are plotted individually in the figure.}
  \label{Paraalf}
\end{figure}
\section{Risk-Sensitive Problem Statement}
\label{sec:2risk}

\subsection*{Multi-Item contest} A multi-item contest is a situation in which players exert effort for each item in an attempt to win a prize.
Decision to participate or not is costly and left to the players. An important ingredient in describing a multi-item contest  with the set of potential participants $\mathcal{J}=
\{1, \ldots , n\},$ the set of auctioneers proposing the set of items $\mathcal{I}=\{1,\ldots, m\}$
 is the contest success function, which takes the efforts $b$  of the agents and
converts them into each agent's probability of winning per item: $p_j : \ {b}=(b_{ji})_{j,i}\in \mathbb{R}^{n\times m} \mapsto [0, 1]^m. $ The (expected)
payoff of a risk-neutral player $j$ with  $v_{ji}$ the realized value of winning  item $i$ and a cost of effort function $c_{ji}$ is
 $\sum_{i\in \mathcal{I}}p_{ji}(b) v_{ji} -c_{ji}(b).$ The risk sensitive payoff is   $ u^{-1}[\mathbb{E}(u(v_{ji} -c_{ji}))],$ where $u$ is a  one-to-one mapping.

\subsection*{Multi-Item LUBA: Risk-Sensitive Case}
Multiple sellers (auctioneers) have multiple items (objects)   to sell on the online platform.  
 The payoff of bidder $j$ on item $i$ at that round would be
 $
 r_{ji}=v_{ji}- c_i - \inf B_{i}^*-c_r,
 $ if $j$ is a winner on item $i$, and
 $
 r_{ji}=-c_i - c_r,
 $ if $j$ is not a winner on item $i$ and $b_{ji}>0.$   The instant payoff of bidder $j$ on item $i$ is zero if $b_{ji}$ is reduced to $0$ (or equivalently the empty set).

 \begin{eqnarray}&& r_{ji} (b) \\  \nonumber
 &=& [-c_r-c_i +(v_{ji}-b_{ji})\ind_{\{ b_{ji}=\inf B_i^*\}}]\ind_{\{  b_{ji}\neq 0\}},
 \end{eqnarray}
where the infininum of the empty set is zero.

  \begin{eqnarray} r_j (b) =\sum_{i\in \mathcal{I}} r_{ji}(b).
 \end{eqnarray}

  The {\it risk-sensitive} instant payoff of  bidder $j$ is $ u^{-1}_j\left[\mathbb{E} u_j(r_{j})\right].$
Each bidder $j$ has a budget constraint
 $$
 c_r+\sum_{i\in\mathcal{I}} c_i\ind_{\{ b_{ji}\neq 0\}} +b_{ji}\ind_{\{ b_{ji}=\inf B_i^*\}}\leq \bar{b}_j.
 $$

 The instant payoff of the auctioneer of item $i$ is $$ r_{a,i}=\left(\sum_{j}(c_r+c_i)\ind_{\{ b_{ji}\neq 0 \}}+ \inf B_{i}^* \right)  -  v_{a,i},$$ where
  $v_{a,i}$ is the realized valuation of the auctioneer for item $i.$
 The instant payoff of the auctioneer of a set of item $I$ is $ r_{a,\mathcal{I}}=\sum_{i\in \mathcal{I}} r_{a,i}.$
 The {\it risk-sensitive} instant payoff of the auctioneer of a set of item $I$ is $ u_a^{-1}\left[\mathbb{E} u_a(r_{a,\mathcal{I}})\right].$

Bidders are interested in optimizing their risk-aware payoffs and the auctioneers are interested in their revenue under risk.

\subsection{Risk-Sensitive Solution Concepts}
Since the risk-sensitive game is of incomplete information, the strategies must be specified as a function of the information structure.
 \begin{defi}
 A pure strategy of  bidder  $j$ is a choice of a subset of natural numbers given the own-value and own-budget.
Thus, given its own valuation vector
 $v_j=(v_{ji})_i,$ bidder $j$ will choose an action  $(b_{ji})_i$  that satisfies the budget constraints $$\  c_r\ind_{\{ b_{j}\neq 0 \}}+\sum_{i=1}^m  [\inf B_{i}^*]\ind_{b_{ji} = \inf B_{i}^*}+\sum_{i=1}^m c_i\ind_{\{ b_{ji}\neq 0 \}}\leq \bar{b}_j.$$  The set of multi-item bid space for bidder $j$ is
 $$\begin{array}{c}
 \mathcal{B}_j(v_j,\bar{b}_j)=\{ (b_{ji})_i \ | \  \  b_{ji} \in [0, \bar{b}_j],\  \\   c_r\ind_{\{ b_{j}\neq 0 \}}+\sum_{i=1}^m  [\inf B_{i}^*]\ind_{b_{ji} = \inf B_{i}^*}+\sum_{i=1}^m c_i\ind_{\{ b_{ji}\neq 0 \}}\leq \bar{b}_j
\}.
 \end{array}$$

 A pure strategy is a mapping $ v_j \mapsto b_j\in \mathbb{R}_+.$
  A  constrained pure strategy is a mapping $ v_j \mapsto b_j\in \mathcal{B}_j.$ A mixed strategy is a probability measure over the set of pure strategies.

  \end{defi}

  \subsection{Continuum approximation of discrete bid space}
  Let $\delta>0$ be a currency/coin point in which the bidding will take place. $\delta$ is a rational number. The bid space is $\delta \mathbb{N}=\{ \delta, 2\delta,3\delta,\ldots\} \subset \mathbb{R}.$
When the bid units are very small (below or in the order of  $\delta=\frac{1}{100}$) the bid space is huge and the complexity is exponential.  One standard method  is to consider this set as discretization of the  continuous set $\prod_{i\in \mathcal{I}}(0, \min(\bar{b}_j,{v}_{ji}-c_i-c_r)]$ under the budget constraint
$$
 c_r+\sum_{i\in\mathcal{I}} c_i \ind_{\{ b_{ji}\neq 0\}} +b_{ji}\ind_{\{ b_{ji}=\inf B_i^*\}}\leq \bar{b}_j.
 $$

However, the  continuous bid space case provides different outcomes as explained below.

We investigate the smooth function case.
For each item $i,$ the random variable ${v}_{ji}$ has a continuously differentiable ($C^1$) cumulative distribution function $F_{ji}.$
Note  that
 the event $\{w: \ b_{ji}(v_j, w)=  b_{ki}(v_k, w)\}$ for $j\neq k$  is of measure zero since the valuation has a continuous probability density function.
Thus, the risk-sensitive LUBA analysis reduces to the (non-zero) lowest price on each item and their attitudes towards the risk.
\begin{lem}\label{lem1}
In the continuous bid space case,
the terms $P(x_i< \min_{j'\neq }b_{j'i})$  and $u(v_{ji}-x_i)P(x_i< \min_{j'\neq }b_{j'i})$ are decreasing in $x_i.$
\end{lem}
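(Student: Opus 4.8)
The plan is to treat both quantities as functions of $x_i$ through the single random variable $M_i := \min_{j'\neq j} b_{j'i}$, the lowest of the competing bids on item $i$, and to argue by monotonicity rather than by any explicit computation. Here each $b_{j'i}=b_{j'i}(v_{j'},\cdot)$ is the (constrained) strategy of competitor $j'$ evaluated at a random valuation, so $M_i$ is a genuine random variable, and the measure-zero-tie observation stated just before the lemma means that the event ``$x_i$ wins with bid $x_i$'' coincides (up to a null set) with the strict-inequality event $\{x_i < M_i\}$.

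First I would dispose of $P(x_i < M_i)$. Writing $F_{M_i}$ for the cumulative distribution function of $M_i$, one has $P(x_i < M_i) = 1 - F_{M_i}(x_i)$, i.e. the survival (tail) function of $M_i$. Since every cumulative distribution function is non-decreasing, its complement $1 - F_{M_i}$ is non-increasing, so $x_i \mapsto P(x_i < M_i)$ is decreasing. The $C^1$ assumption on each $F_{ji}$, together with the measure-zero-tie property, guarantees that $M_i$ has no atoms; this both legitimizes the strict inequality above and makes the decrease strict wherever the density of $M_i$ is positive.

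Next I would treat $u(v_{ji}-x_i)\,P(x_i < M_i)$. Because $u$ is the one-to-one risk transform, which we take to be increasing (a continuous injective map on an interval is strictly monotone in any case, and the increasing orientation is the one for which the claim holds), the factor $x_i \mapsto u(v_{ji}-x_i)$ is decreasing. By the dominated-strategy reduction used earlier---bids exceeding the net valuation are dominated, cf. Proposition \ref{domino}---it suffices to argue on the effective domain $x_i \in \bigl(0,\ \min(\bar{b}_j,\, v_{ji}-c_i-c_r)\bigr]$, on which $v_{ji}-x_i \ge c_i+c_r > 0$ and hence, after the usual normalization $u(0)=0$, $u(v_{ji}-x_i)\ge 0$. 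The quantity is then a product of two non-negative decreasing functions, and a product of non-negative decreasing functions is decreasing, which yields the second claim.

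The step I expect to require the most care is exactly this non-negativity of the utility factor: the ``product of decreasing functions is decreasing'' conclusion fails as soon as one factor changes sign, so the argument genuinely relies on $u(v_{ji}-x_i)\ge 0$ on the admissible range, which is what the dominated-strategy restriction supplies. If instead $u$ is assumed differentiable, the same conclusion follows by checking directly that
\begin{equation*}
\frac{d}{dx_i}\bigl[u(v_{ji}-x_i)(1-F_{M_i}(x_i))\bigr] = -u'(v_{ji}-x_i)(1-F_{M_i}(x_i)) - u(v_{ji}-x_i)f_{M_i}(x_i) \le 0,
\end{equation*}
where $f_{M_i}=F_{M_i}'$; both terms on the right are non-positive precisely because $u'>0$, $f_{M_i}\ge 0$, and $u(v_{ji}-x_i)\ge 0$ on the effective domain.
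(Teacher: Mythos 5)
Your proposal is correct and follows essentially the same route as the paper's own (one-line) proof: rewrite $P(x_i<\min_{j'\neq j}b_{j'i})$ as the survival function $1-F_{M_i}(x_i)$ of the lowest competing bid, which is non-increasing, and multiply by the decreasing factor $u(v_{ji}-x_i)$. The only difference is that you explicitly supply the non-negativity of $u(v_{ji}-x_i)$ on the effective bid domain (via the dominated-strategy reduction and the normalization $u(0)=0$), which is genuinely needed for the product of two decreasing functions to be decreasing and which the paper's proof passes over in silence; this is a welcome tightening rather than a departure.
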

\begin{proof}It is because the cumulative distribution function $F(x)=P(X < x)$ increases with $x$  and the function $x \mapsto c-x$ deceases with $x.$ Observing  that $P(x_i< \min_{j'\neq }b_{j'i})$ can be written as $1-F(x_i),$ the assertion follows.
\end{proof}

Lemma \ref{lem1} provides a structural property of the better response bid in the continuous bid space case. The monotonicity properties above say that one can improve the payoff by  approaching the bid to zero. However, zero is not allowed bid.
When $j$ do not participate to the LUBA its payoff is $u^{-1}_j\left[ \mathbb{E} u_j(0,b_{-j})\right]=0.$ The risk-sensitive participation constraint yields $u^{-1}_j\left[ \mathbb{E} u_j(b)\right]\geq 0.$

%Therefore the equilibrium structure is obtained by solving
%$$
%u^{-1}_j\left[ \mathbb{E} u_j(0)\right]=0 = u^{-1}_j\left[ \mathbb{E} u_j(B)\right],\ B\neq \emptyset.
%$$

\subsection{Bid Resubmissions}  Let $\mathcal{T}$ be the time space of one round of LUBA game, $\mathcal{T}=\{ 1,\ldots, T\}$ for $T\geq 1.$ At each time-step $t\in\mathcal{T}$ the bidders have opportunity to revise and resubmit another bid. A bidder has also the option not to resubmit and save the resubmission cost. A sequence of actions on item $i$ by bidder $j$  is $a_{ji}=(b_{ji}^1,\ldots, b_{ji}^T),$ where $b_{ji}^t$ can be a bid in $\delta \mathbb{N}$ or $NoBid="0".$

Note that each bidder can resubmit bids a certain number of times subject to her available budget, each resubmission for
 item $i$ will cost $c_i.$ Let $B_{ji}$ be the set of the non-zero (strictly positive) component of the action $a_{ji}.$ The cardinality of
 $B_{ji}$ is $n_{ji}=\sum_{t\in \mathcal{T}} \ind_{\{  b_{ji}^t> 0\}}.$
 If bidder $j$
 has (re)submitted $n_{ji}$ times on item $i$ her total submission/bidding cost would be $n_{ji}c_i$ in addition to the registration fee. Denote
 the set that contains all the bids of bidder $j$ on item $i$ by  $B_{ji}\subset \delta\mathbb{N}.$   The set of bidders who are submitting $b$ on item $i$ is denoted by
 $$ N_{i,b}=\{ j\in \mathcal{J} \ | \ b \in B_{ji} \}. $$
 The set of all positive natural numbers that were chosen by only one bidder on item $i$ is
 $B_{i}^*=\{ b>0 \ |  \  | N_{i,b} |=1 \}.$
 If $B_{i}^*=\emptyset$ then there is no winner on item $i$ at that round (after all the resubmission possibilities). If $B_{i}^*\neq \emptyset$ then there is a
 winner on item $i$ and the winning bid is $\inf B_{i}^*$ and winner is $j^*\in N_{i,\inf B_{i}^*}.$ The payoff of bidder $j$ on item $i$ at that round would be
 $
 r_{ji}=v_{ji}-|B_{ji}| c_i - \inf B_{i}^*-c_r,
 $ if $j$ is a winner on item $i$, and
 $
 r_{ji}=-|B_{ji}| c_i - c_r,
 $ if $j$ is not a winner on item $i.$   The payoff of bidder $j$ on item $i$ is zero if $B_{ji}$ is reduced to $\{0\}$ (or equivalently the empty set).

 \begin{eqnarray}&& r_{ji} (B) \\  \nonumber
 &=& [-c_r-c_i |B_{ji}|+(v_{ji}-b_{ji})\ind_{\{ b_{ji}=\inf B_i^*\}}]\ind_{\{  B_{ji}\neq \{0\}\}},
 \end{eqnarray}
where the infininum of the empty set is zero. The  payoff  is $r_j (B) =\sum_{i\in \mathcal{I}} r_{ji}(B)$ and the risk-sensitive payoff is  $u^{-1}_j\left[ \mathbb{E} u_j(r_j (B))\right].$ We choose the exponential function (with risk-sensitive index $\theta_j$ )to simplify the analysis.

\subsection{Risk-sensitive equilibria}

\begin{proposition}[Two bidders] Let $v>c+1.$
The risk-sensitive LUBA with resubmission has a mixed strategy given by
$$
(x_{0},x_{1},x_{12}, \ldots, x_{12\ldots,(k-1)},x_{12\ldots k})
$$
with $x_{12\ldots i}=\frac{e^{(i+1)\theta c}-e^{i\theta c}}{e^{\theta(v-(i+1))}-1}$,

$x_{12\ldots k}=1-x_0-\sum_{i=1}^{k-1}x_{12\ldots i}$ and 
$x_{12\ldots l}=0$ if $l>k$\\
where $u_j(r_j)=e^{\theta_jr_j}$.
\end{proposition}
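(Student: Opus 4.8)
The plan is to mirror the risk-neutral computation of Proposition \ref{tworesubmission}, the only genuine change being that each realized payoff $r_{1i}$ is passed through the certainty equivalent $u_j^{-1}[\mathbb{E}\,u_j(\cdot)]$ with $u_j(r)=e^{\theta_j r}$. First I would reduce the action space. By the dominance argument behind Proposition \ref{domino}, together with the observation that among all bid sets of a fixed cardinality the consecutive ``staircase'' set $\{1,2,\dots,l\}$ weakly dominates any set with gaps (a gap can only push the winning position higher, at equal cost), it suffices to seek a symmetric equilibrium supported on the staircase actions $\{0\},\{0,1\},\{0,1,2\},\dots$. Since not participating yields the deterministic payoff $0$ with $u_j(0)=1$, the risk-sensitive indifference condition for every staircase in the support is $\mathbb{E}[e^{\theta r_{1i}}]=1$; that is, the certainty equivalent of each played staircase equals the payoff of abstaining.

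Next I would write the expected exponential utility of bidder $1$ playing the staircase $\{0,1,\dots,l\}$ against the symmetric strategy $x=(x_0,x_1,\dots)$. The enumeration of outcomes is exactly the one tabulated in the proof of Proposition \ref{tworesubmission}: against an opponent staircase of length $s<l$ bidder $1$ wins at bid $s+1$ for net payoff $v-lc-(s+1)$, while against a length $\ge l$ opponent bidder $1$ loses and pays $lc$. Hence
\[
E_l:=\mathbb{E}[e^{\theta r_{1i}}]=e^{-l\theta c}\Big[\sum_{s=0}^{l-1}x_s\,e^{\theta(v-(s+1))}+\sum_{s\ge l}x_s\Big].
\]
Using $\sum_s x_s=1$ to replace $\sum_{s\ge l}x_s=1-\sum_{s<l}x_s$ and imposing $E_l=1$ collapses this to the clean relation $\sum_{s=0}^{l-1}x_s\big(e^{\theta(v-(s+1))}-1\big)=e^{l\theta c}-1$ for each $l$ in the support.

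The closed form then drops out by subtracting the relation for $l-1$ from that for $l$, which isolates a single term $x_{l-1}\big(e^{\theta(v-l)}-1\big)=e^{l\theta c}-e^{(l-1)\theta c}$, i.e. $x_{12\ldots i}=\dfrac{e^{(i+1)\theta c}-e^{i\theta c}}{e^{\theta(v-(i+1))}-1}$ with $i=l-1$; the index $k$ is taken as the largest integer for which the residual mass $x_{12\ldots k}=1-x_0-\sum_{i=1}^{k-1}x_{12\ldots i}$ stays positive (subject to the budget and value bounds on the staircase length), all longer staircases receiving zero probability. A sanity check is that as $\theta\to 0$ each $x_{12\ldots i}\to \frac{c}{v-(i+1)}$, recovering the risk-neutral strategy of Proposition \ref{tworesubmission}, and that the common equilibrium certainty equivalent is $0$.

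The main obstacle, beyond the bookkeeping, is rigor at the two ends of the support. On the structural side I must actually justify the reduction to staircases — verifying that every gapped bid set, and every non-symmetric profile, is weakly dominated or cannot beat a staircase best response — rather than merely positing the form. On the upper side I must confirm the truncation is a genuine equilibrium and not just an indifference solution: one checks that deviating to a staircase longer than $k$ yields $E_l\le 1$ (certainty equivalent $\le 0$) once the residual mass is exhausted, and that positivity of each $x_{12\ldots i}$ forces $v>i+1$, which together with the budget $\bar b$ and per-bid cost $c$ pins down the admissible $k$. The hypothesis $v>c+1$ guarantees $x_0,x_{12\ldots 1}>0$, so the support is nonempty.
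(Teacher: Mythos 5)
Your proposal is correct and follows essentially the same route as the paper: impose the risk-sensitive indifference condition $\mathbb{E}[e^{\theta r}]=1$ on the staircase actions $\{0,1,\dots,l\}$, obtain the triangular system $\sum_{s=0}^{l-1}x_s\bigl(e^{\theta(v-(s+1))}-1\bigr)=e^{l\theta c}-1$, and difference consecutive equations to isolate each $x_{l-1}$. Your additional remarks — the $\theta\to 0$ consistency check against the risk-neutral case, and the need to actually justify the restriction to staircase supports and verify that deviations beyond length $k$ are unprofitable — are points the paper's proof silently assumes rather than argues, so they strengthen rather than diverge from it.
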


\begin{proof}
Let $k$ be the largest integer such that $y_k=\mathbb{P}(\{0,1, . . . , k \}) > 0,$ $k\leq \bar{b}$. When  bidder 2's strategy is $y,$ the expected payoff of  sensitive bidder 1 when bidding $\{0,1,...,l\}$ is equal to 1 in equilibrium due to the indifference condition, for each $l \in \{1, 2,\ldots, k\}$. The cost of such a bid is equal to  $l c$.

On the other hand, the expected gain can be computed as follows. With probability $y_0$, bidder 2 will not post any bid, the winning bid is 1, and the gain is thus  $v - 1$ for any $l\leq  k.$
For $1\leq l\leq k,$ the bidder 2 bids $\{0,1,...,l\}$ with probability $y_i$ and the winner bid is $i+1$ from bidder 1, and 1's gain will $v-(l+1).$
Take the risk-sensitive perspective into consideration, The $\left[ \mathbb{E} u_j(r_j (B))\right]$ of bidder 1 when playing $\{0,...,l\}$ is therefore given by
$$\begin{array}{l}
\mbox{Action} \{0\}:\\
u(r_{1i}(\{0\}, y))=e^{\theta*0}=1\\
\mbox{Action} \{01\}:\\
u(r_{1i}(\{01\}, y))= e^{\theta(v-c-1)}y_0 +e^{-\theta c} y_1 +e^{-\theta c}(y_2+\ldots+y_k),\\
\mbox{Action} \{012\}:\\
u(r_{1i}(\{012\}, y))= e^{\theta(v-2c-1)}y_0 +e^{\theta(v-2c-2)}y_1\\ +e^{-\theta2c} y_2 +e^{-\theta 2c}(y_3+\ldots+y_k),\\
\ldots \\ \mbox{Action} \{012\ldots l\}:\\
u(r_{1i}(\{012\ldots l\}, y))= e^{\theta(v-lc-1)}y_0 \\ +e^{\theta(v-lc-2)}y_1\\ +\ldots +e^{\theta(v-lc-l)}y_{l-1}\\ +e^{-\theta lc} y_l +e^{-\theta lc}(y_{l+1}+\ldots+y_k)\\ \mbox{Action} \{012\ldots (l+1)\}:\\
u(r_{1i}(\{012\ldots l+1\}, y))= e^{\theta(v-(l+1)c-1)}y_0\\ +e^{\theta(v-(l+1)c-2)}y_1\\ +\ldots +e^{\theta(v-(l+1)c-(l+1))}y_{l}\\ +e^{-\theta(l+1)c} y_{l+1} +e^{-\theta(l+1)c}(y_{l+2}+\ldots+y_k)\\
\ldots \\ \mbox{Action} \{012\ldots k\}:\\
u(r_{1i}(\{012\ldots k\}, y))=  e^{\theta(v-kc-1)}y_0+e^{\theta(v-kc-2)}y_1\\ +\ldots+e^{\theta(v-kc-k+1)}y_{k-1}+e^{-\theta kc} y_k.\\
y_l\geq 0,\ y_0+\ldots+y_k=1,\\
 y_{k+s+1}=0 \mbox{ for}\ s\geq 0.
\end{array}$$

It turns out that

$$\left\{\begin{array}{c}
 (e^{\theta(v-1)}-1)y_0+1 =e^{\theta c}\\
 (e^{\theta (v-1)}-1)y_0 +(e^{\theta(v-2)}-1)y_1+1=e^{\theta 2c} \\
\ldots \\
(e^{\theta (v-1)}-1)y_0 +(e^{\theta (v-2)}-1)y_1+\ldots\\
 +(e^{\theta (v-l)}-1)y_{l-1}+1=e^{\theta lc} \\
 (e^{\theta (v-1)}-1)y_0 +(e^{\theta (v-2)}-1)y_1+\ldots \\
 +(e^{\theta (v-(l+1))}-1)y_{l}+1=e^{\theta (l+1)c} \\
\ldots \\
  (e^{\theta (v-1)}-1)y_0+(e^{\theta (v-2)}-1)y_1+\ldots\\
  +(e^{\theta (v-k)}-1)y_{k-1}+1=e^{\theta kc}.
\end{array} \right.
$$

For $l$ between $1$ and $k-1$ we make the difference between line $l+1$ and line $l$ to get:
$$\left\{
\begin{array}{c}
 y_0 = \frac{e^{\theta c}-1}{e^{\theta(v-1)}-1}\\
y_1=\frac{e^{2\theta c}-e^{\theta c}}{e^{\theta(v-2)}-1} \\
\ldots \\
y_{l-1}=\frac{e^{l\theta c}-e^{(l-1)\theta c}}{e^{\theta(v-l)}-1}  \\
y_{l}=\frac{e^{(l+1)\theta c}-e^{l\theta c}}{e^{\theta(v-(l+1))}-1} \\
\ldots \\
 y_{k-1}=\frac{e^{(k)\theta c}-e^{(k-1)\theta c}}{e^{\theta(v-k)}-1} \\
 y_k=1-(y_0+y_1+\ldots+y_{k-1}) >0\\
 y_{k+1+s}=0.
\end{array}
\right.
$$

Thus, the partially mixed strategy $$y^*=(\frac{e^{\theta c}-1}{e^{\theta(v-1)}-1},\frac{e^{2\theta c}-e^{\theta c}}{e^{\theta(v-2)}-1},\ldots, \frac{e^{(k)\theta c}-e^{(k-1)\theta c}}{e^{\theta(v-k)}-1},$$
 $$1-\sum_{l=0}^{k-1}\frac{e^{(l+1)\theta c}-e^{l\theta c}}{e^{\theta(v-l-1)}-1},0,\ldots, 0)$$ is an equilibrium strategy. The equilibrium sensitive-risk payoff is $\frac{1}{\theta_j}log{(\mathbb{E}e^{\theta_jr_j})}=0$.
\end{proof}
Note that, for two bidders the equilibrium strategy has monotone support. The set of actions $B_j$ such that $x_{B_j}>0$ is an increasing inclusion: $$\{ \} \subset \{ 1\}  \subset  \{ 12\} \subset  \{ 123\} \subset \{ 1234\}  \ldots  \{ 1234\ldots k\} , $$
it means that $ x_{2}= x_{3}=\ldots = x_{k}=0= x_{13}=x_{14}=x_{23}=x_{24}$ etc.

We analyze the Nash-Equilibrium of LUBA with resubmission in which the risk-sensitive payoffs are considered. Without losing generalization meaning, we specify $u_j(r_j)=e^{\theta_jr_j}$. So the risk-sensitive instant payoffs of bidder j is
$R_j=\frac{1}{\theta_j}log{(\mathbb{E}e^{\theta_jr_j})}$  where $\theta_j\neq 0$. For $\theta_j=0$, the $R_j$ is equal to $r_j$. The bidder $j$ can be divided into different categories based on $\theta_j.$
\begin{enumerate}
\item The bidder $j$ is risk-seeking for $\theta>0$
\item The bidder $j$ is risk neutrality for $\theta=0$
\item The bidder $j$ is risk aversion for $\theta<0$
\end{enumerate}
We analyze the simple case under the above assumption, where 2 bidders participate in the LUBA. We analyze different combinations of $\theta_j$ and give corresponding payoff matrix. In order to simplify the analysis, we assume that the bid space of each bidders is $\{0,1\}.$ Despite the simplification, this methodology can be extended to the general situation.

\begin{table}[ht]
\caption{$u_j(r_j)=e^{\theta_jr_j}$ of two risk sensitive bidders - One item}
\begin{center}
\begin{tabular}{c|c|c} \hline

 &    0                    & 1    \\ \hline
0& (1, 1)                  &(1, $e^{\theta_2(v-1)}$) \\ \hline
1&($e^{\theta_1(v-1)}$, 1) &($e^{-\theta_1}$, $e^{-\theta_2}$)\\\hline

\end{tabular}
\end{center}
\label{taby1}
\end{table}%

\begin{enumerate}
\item \textbf{Pure Nash-Equilibrium}: 
Table \ref{taby2} shows how the payoff matrix for risk-sensitive can be converted to the risk-neural payoff matrix  when analyzing pure Nash Equilibrium. So there are two pure Nash equilibrium.
\begin{table}[ht]
\caption{Two risk neutral bidders - One item}
\begin{center}
\begin{tabular}{c|c|c} \hline

 &    0                    & 1    \\ \hline
0& (0, 0)                  &$(0, v-1)^*$ \\ \hline
1&$(v-1, 0)^*$ & (-1,-1)\\\hline
\end{tabular}
\end{center}
\label{taby2}
\end{table}
\item \textbf{Mixed Nash-Equilibrium}: For bidder 2 the mixed strategy is:
\begin{enumerate}
\item Bidder 1: $(\frac{1-e^{\theta_2}}{1-e^{v\theta_2}},1-\frac{1-e^{\theta_2}}{1-e^{v\theta_2}})$
\item Bidder 2: $(\frac{1-e^{\theta_1}}{1-e^{v\theta_1}},1-\frac{1-e^{\theta_1}}{1-e^{v\theta_1}})$
\end{enumerate}
\end{enumerate}
\begin{figure}[!htb]
  \centering
    \includegraphics[width=0.9\textwidth]{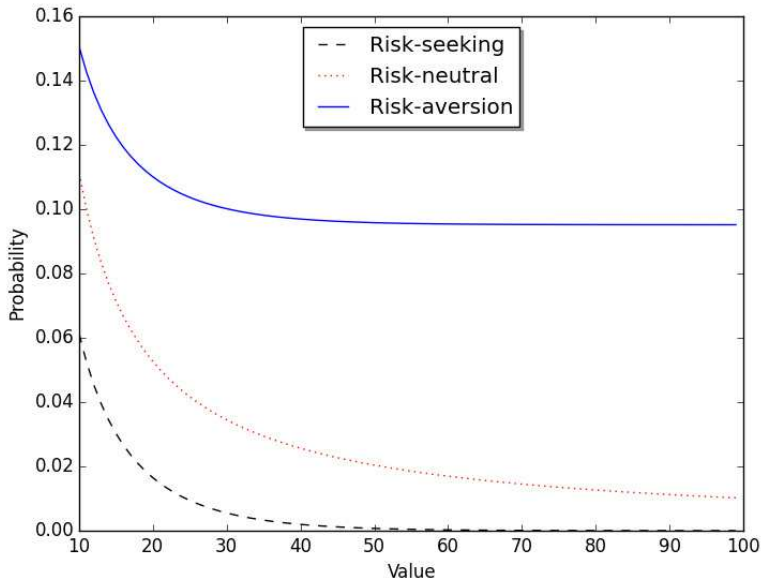}
  \caption{Probability of not participating in LUBA in mixed strategy for risk-sensitive bidders. The black dash line represents the strategy evolution with value for a risk-seeking bidder whose risk parameter is $\theta=0.2.$ The red dot line represents the strategy evolution with value for a risk-neutral bidder whose risk parameter is $\theta=0.$ The blue line is strategy evolution with value for a risk-aversion whose risk parameter is $\theta=-0.2.$} \label{simpleNash}
\end{figure}

As shown in Figure \ref{simpleNash}, the probability of participating in a risk-seeking bidder is greater than that of a risk-neutral one. And the probability of participating in a risk-aversion bidder is less than that of a risk-neutral one. This observation shows that over assumption on risk function is reasonable.

The next Theorem shows that this monotonicity property of the mixed equilibrium is lost when three or more bidders are involved.

\begin{proposition}[Three or more bidders] Suppose the realized parameters $(v,\theta)$ of the risk-sensitive LUBA are symmetric 
with three or more bidders. Then the risk-sensitive LUBA with resubmission has a  symmetric equilibrium in mixed strategies. However, 
there is no symmetric mixed equilibrium with monotone strategies.
\end{proposition}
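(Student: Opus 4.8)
The plan is to treat the two assertions separately: to establish existence by a symmetric fixed-point argument, and to rule out monotone equilibria by exhibiting a profitable non-nested deviation.

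For existence, I would first invoke Proposition \ref{domino} to replace the infinite bid space by the finite, budget-feasible action set $A$; under symmetric realized parameters $(v,\theta)$ all $n$ bidders share the same $A$ and the same payoff kernel, so the reduced game is a finite \emph{symmetric} game. The risk-sensitive payoff of a bidder is the certainty equivalent $U_j(\sigma)=\frac{1}{\theta}\log\mathbb{E}_\sigma[e^{\theta r_j}]$, which is continuous in the profile $\sigma$ and, being a strictly monotone transform of a quantity that is affine in a bidder's own mixed strategy, is quasi-concave (indeed quasi-linear) in $\sigma_j$. Hence the symmetrized best-reply correspondence $\sigma\mapsto\arg\max_{\sigma_j}U_j(\sigma_j;\sigma,\dots,\sigma)$ is nonempty, convex- and compact-valued and upper hemicontinuous on the simplex $\Delta(A)$, and Kakutani's theorem (the symmetric form of Nash's theorem) yields a fixed point $\sigma^{*}$, which is the desired symmetric mixed equilibrium.

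The second assertion is where the work lies. I would first record the structure of a nested (monotone) profile: if every bidder randomizes only over the initial segments $S_\ell=\{1,\dots,\ell\}$, then a bidder is the lowest-unique winner if and only if she is the \emph{strict} top bidder, in which case she pays $(\text{second-highest level})+1$. Writing $G_m=\sum_{\ell\le m}x_\ell$ for the level c.d.f. and $C_m=G_m^{\,n-1}$ for the probability that all opponents stay at or below $m$, the indifference conditions across the support reduce, after differencing consecutive levels exactly as in the two-bidder analysis (Proposition \ref{tworesubmission}), to $(v-\ell)(C_{\ell-1}-C_{\ell-2})=c$ together with $(v-1)C_0=c+c_r$. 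The main obstacle is that this system is \emph{consistent for every} $n$: it is solved by the same cumulative sequence $C_m$ as in the case $n=2$, now read through $G_m=C_m^{1/(n-1)}$. Thus the contradiction cannot be obtained from the internal indifference conditions, and one must locate an external, non-nested deviation.

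I would therefore test the ``gap'' action $\{2\}$, which bids on $2$ but skips the heavily contested value $1$. Comparing it event-by-event with the support action $\{1\}=S_1$ (whose equilibrium value is $0$), the two actions differ only on the event $E_0$ that all opponents abstain, where $\{2\}$ forfeits one unit of surplus, and on the event $E_{1b}$ that at least two opponents bid $1$ while none bid $\ge 2$, where value $1$ is \emph{blocked} and $\{2\}$ wins at price $2$ while $\{1\}$ wins nothing. This yields $\mathbb{E}[r_{\{2\}}]=-\,x_0^{\,n-1}+(v-2)\sum_{j\ge 2}\binom{n-1}{j}x_1^{\,j}x_0^{\,n-1-j}$, whose second term is precisely the strict-convexity (Jensen) gap of $t\mapsto t^{\,n-1}$. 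For $n=2$ this gap vanishes and the nested profile is a genuine best reply, recovering Proposition \ref{tworesubmission}; for $n\ge 3$ the event $E_{1b}$ has positive probability and the gap is strictly positive, so in the regime where the monotone support is nontrivial the deviation strictly improves on the equilibrium value, and no nested profile can be a symmetric equilibrium. Finally, since the winning combinatorics that generate $E_{1b}$ do not depend on $\theta$, the same comparison carried out with $e^{\theta(\cdot)}$ in place of the identity (and the normalization $\mathbb{E}[e^{\theta r}]=1$) shows the gap deviation remains profitable for the realized symmetric $(v,\theta)$, completing the argument.
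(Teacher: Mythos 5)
Your existence argument is fine and coincides with what the paper does: the paper simply declares the existence of a symmetric mixed equilibrium of the finite symmetric (budget-reduced) game to be standard and gives no further detail. For the second assertion the paper offers no proof at all --- it explicitly omits it and only illustrates non-monotonicity with a worked three-bidder example whose computed equilibrium support contains the non-nested action $\{2\}$ --- so your attempt to supply an actual argument is a genuinely different (and more ambitious) route. Your structural observations are correct and valuable: under a nested profile the winner is the strict top bidder paying the second-highest level plus one, and the differenced indifference conditions pin down the \emph{same} cumulative sequence $C_m$ for every $n$, so no contradiction can come from the internal system and one must exhibit an external, non-nested deviation. This is exactly the right framing, and the choice of the gap action $\{2\}$ mirrors the action that appears in the paper's illustrative equilibrium.

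However, the deviation step has a genuine gap. You correctly arrive at
\[
\mathbb{E}[r_{\{2\}}]-\mathbb{E}[r_{\{1\}}]=-x_0^{\,n-1}+(v-2)\Bigl[(x_0+x_1)^{n-1}-x_0^{\,n-1}-(n-1)x_1x_0^{\,n-2}\Bigr],
\]
but you then infer profitability from the strict positivity of the bracketed Jensen gap alone, discarding the negative term $-x_0^{\,n-1}$ (the unit of surplus forfeited on the all-abstain event). The sign of the whole expression is a quantitative question. Substituting the indifference relations $x_0^{\,n-1}=c/(v-1)$ and $(x_0+x_1)^{n-1}-x_0^{\,n-1}=c/(v-2)$, the displayed quantity is positive if and only if $(n-1)x_1<x_0$, equivalently $\bigl(1+\tfrac{v-1}{v-2}\bigr)^{1/(n-1)}<1+\tfrac{1}{n-1}$, and this fails for small $v$: for $n=3$, $c=1$, $v=5$ one gets $x_0=1/2$, $x_1=\sqrt{7/12}-1/2\approx 0.264$, and $\mathbb{E}[r_{\{2\}}]\approx-0.25+3(0.264)^2\approx-0.04<0$, so $\{2\}$ is strictly \emph{unprofitable} there. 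Hence your single test deviation does not rule out monotone symmetric equilibria uniformly over the admissible parameter range; closing the argument requires either further deviations (or a different one) in the low-$v$ regime, or an explicit restriction on $(v,c,n)$. The risk-sensitive transfer at the end inherits the same problem, since it rests on the same unproved sign claim.
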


The existence of symmetric mixed equilibrium in finite symmetric game is by now standard. We omit the details proofs of the non-monotonicity of the optimal strategies when three or more bidders are involved. The example below illustrates the result in the three-bidder case.

We analyze another simple example to show the properties of LUBA  with resubmission. We assume 3 bidders participate in a symmetric LUBA game. The submission fee is $c$, and the bid space is The specific parameters setting is €€$\{1,2,3\}$ and their combinations. 
The payoff matrix is shown in the Table \ref{tabm}.

\begin{table*}[ht]
\begin{tiny}
\caption{Payoff matrix in 3 bidders symmetric LUBA game}
\begin{center}
\begin{tabular}{|c|c|c|c|c|c|c|c|c|c|c|c|c|c|c|c|c|c|c}   \hline

  &   0,0    &   1,0     &2,0    &1.2,0   &0,1    &1,1    &2,1    &1.2,1     &0,2      &1,2      &2,2       &1.2  ,2   &0,1.2     &1,1.2     &2,1.2  &1.2,1.2                    \\ \hline
0& 0         &  0         &0       &0         &0      &0        &0       &0           &0         &0         &0         &0           &0           &0           &0         &0 \\ \hline
1&  v-c-1  & -c         &v-c-1 & -c       &-c     &-c       &-c      &-c           &v-c-1   &-c       &v-c-1     &-c        &-c      &-c          &-c        &-c\\ \hline
2& v-c-2  & -c         &-c      & -c       & -c    &v-c-2  &-c      &-c           &-c         &-c      &-c          &-c         &-c          &-c          &-c          &-c\\ \hline
1,2&v-2c-1&v-2c-2  &v-2c-1 &-2c    &v-2c-2&v-2c-2&-2c  &-2c        &v-2c-1   &-2c  &v-2c-1     &-2c      &-2c        &-2c         &-2c       &-2c\\ \hline
\end{tabular}
\end{center}
\label{tabm}
\end{tiny}
\end{table*}

Based on the payoff matrix, we introduce the following indifferent condition:
\begin{enumerate}
\item $\mbox{Payoff}(0)=0$
\item $\mbox{Payoff}(1)=(v-c-1)(y_0z_0+y_2z_0+y_0z_2+y_2z_2)-c[1-(y_0z_0+y_2z_0+y_0z_2+y_2z_2)]$
\item $\mbox{Payoff}(2)=(v-c-2)(y_0z_0+y_1z_1)-c[1-(y_0z_0+y_1z_1)]$
\item $\mbox{Payoff}(3)=(v-2c-1)Q_0+(v-2c-2)Q_1-2c[1-Q_0-Q_1]$\\
\end{enumerate}
where $Q_0=y_0z_0+y_2z_0+y_0z_2+y_2z_2$, and $Q_1=y_1z_0+y_1z_1+y_0z_1$.
As we assume that the proposed system is a symmetric LUBA game with three bidders, thus we can induce the following equations according the above indifferent condition equations.
\begin{enumerate}
\item $(v-c-1)(x_0^2+2x_0x_2+x_2^2)-c[1-(x_0^2+2x_0x_2+x_2^2)]=0$
\item $(v-c-2)(x_0^2+x_1^2)-c[1-(x_0^2+x_1^2)]=0$
\item $(v-2c-1)(x_0^2+2x_0x_2+x_2^2)+(v-2c-2)(2x_0x_1+x_1^2)-2c[1-(x_0^2+2x_0x_2+x_2^2)-(2x_0x_1+x_1^2)]$
\end{enumerate}
Then we can induce the following equations: $(x_0+x_2)^2=\frac{c}{v-1}$, $x_0^2+x_1^2=\frac{c}{v-2}$, and $2x_0x_1+x_1^2=\frac{c}{v-2}$ Combining the truth that $x_0+x_1+x_2+x_{1,2}=1$ we can derive the Nash-Equilibrium of this simple example is
 $\{x_0=2\sqrt{\frac{c}{5(v-2)}}, x_1=\sqrt{\frac{c}{5(v-2)}}, x_2=\sqrt{\frac{c}{v-1}}-2\sqrt{\frac{c}{5(v-2)}}, and  x_{\{1,2\}}=1-x_0-x_1-x_2\}$
We calculate four Nash-Equilibrium of different value, and the result is shown in Figure \ref{nashhello}.

\begin{figure}[!htb]
  \centering
    \includegraphics[width=0.9\textwidth]{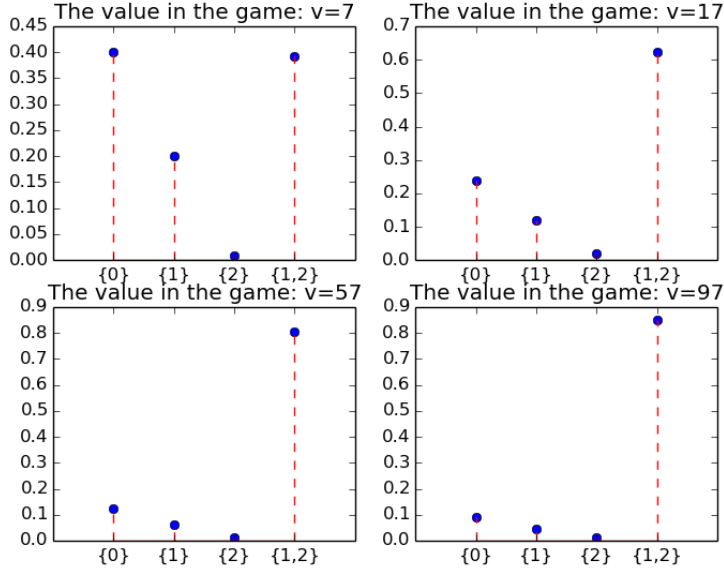}
  \caption{The Nash-Equilibrium evolves with the value $v$} \label{nashhello}
\end{figure}
We can observe that the probability of placing the combination bid ${1,2}$ is increasing with the value variable. Intuitively, it makes sense that when the valuation of the product is very high, the bidder is willing to take a risk for multi-bids. And when the value goes very high, the bid choice of $\{1,2\}$ dominates the whole bid choice space, as shown in the Figure \ref{nashhello}: 'the value in the game: $v=97$'.  

In a more general case, we show the evolution of Nash-Equilibrium with the value of the bidders targeting the item in Figure \ref{nashhello2}. By observing results, we can conclude that:
\begin{enumerate}
\item The probability of $\{0\}$ is higher than $\{1\}$ and $\{2\}$ in a Nash-Equilibrium.
\item The probabilities of single bid actions in the bid space decrease with the value in a Nash-Equilibrium
\item The probability of $\{1,2\}$ in a Nash-Equilibrium increases with the value.
\item The option of $\{1,2\}$ is dominant in action space when the valuation of bidders is very high. 
\end{enumerate}

\begin{figure}[!htb]
  \centering
    \includegraphics[width=0.9\textwidth]{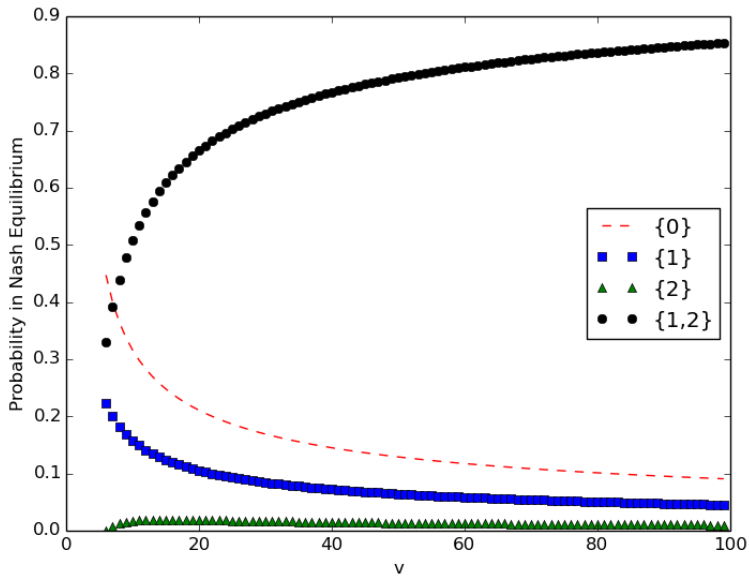}
  \caption{The Nash-Equilibrium evolves with the value $v$} \label{nashhello2}
\end{figure}

\subsection{Simulations of the Risk-Sensitive Scenarios}
\label{sec:4}
This section we will show and analyze the results from out proposed learning algorithm. In order to show the effectiveness of our Monte Carlo algorithm, we conduct a comparison experiment. In the comparison experiment, we compare the results of our proposed learning algorithm and our Monte Carlo algorithm under the same experiment parameters setting. The specific setting is shown in table \ref{tabelcom}.

\begin{table}[ht]
\caption{Parameters setting in the comparison experiment}
\begin{center}
\begin{tabular}{|c|c|c|c|}    \hline
 n &    10   &   m &    1               \\ \hline
 $\bar{b}$ &  10 & c &    1              \\ \hline
 $\alpha$ & 0.05 &$ \lambda $ &0.05      \\ \hline
\end{tabular}
\end{center}
\label{tabelcom}
\end{table}

We analyze the frequencies of each single bid from zero to nine, and the results is shown in Figure \ref{Fcompare}. As shown in Figure \ref{Fcompare}, the equilibria, learned by both of proposed algorithms, show the same tendency. Obviously, they are not exactly same. The reason of this phenomenon is that the Monto-Carlo algorithm is an approximate method and doesn't take consideration of value variable. However, the difference between the approximate result(obtained by Monte-Carlo algorithm) and the precise results(obtained by proposed learning algorithm) is very small. This phenomenon shows that our Monto-Carlo can effectively calculate the approximate Nash-Equilibrium in LUBA.
\begin{figure}[!htb]
  \centering
    \includegraphics[width=0.9\textwidth]{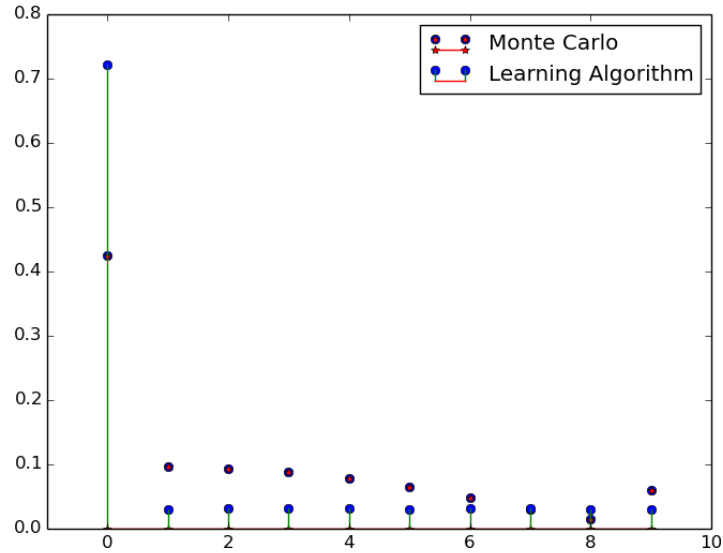}
     \caption{Strategy distribution on four different bidders of Item 1 after 5000 time iterations.}
  \label{Fcompare}
\end{figure}

Note that we show the effectiveness of our Monte-Carlo algorithm in small bid space, now we show the application of our Monte-Carlo algorithm in large bid space situation. We assume that 100 bidders participate the game, their budgets are sampled from 300 to 350 in a uniform distribution, and the value variable is from 90 to 100. The result is shown in Figure \ref{Fapproximate}.

\begin{figure}[!htb]
  \centering
    \includegraphics[width=0.9\textwidth]{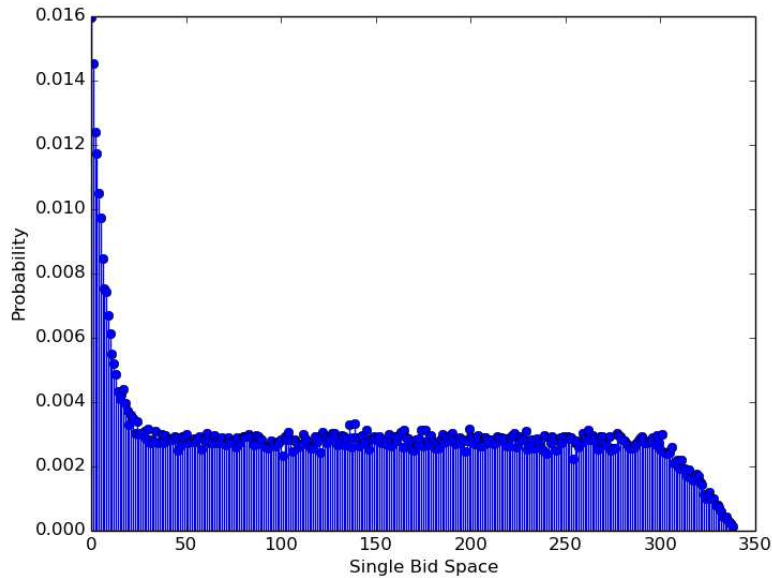}
     \caption{Strategy distribution on four different bidders of Item 1 after 5000 time iterations.}
  \label{Fapproximate}
\end{figure}

From the results we can draw the following conclusions. The tendency of approximated Nash Equilibrium is same as the small bid space situations. Practically, the strategy form like the tendency shown in Figure \ref{Fapproximate} is a suboptimal solution of approximating Nash Equilibrium. It makes sense for that if a bidder want to win the LUBA game, he or she has to block the small bid. And, in a multi bidders LUBA with resubmission, any relative big bids have a chance to win the product as the probability of placing a non-unique bid is high in this experiment setting. 
\section{Conclusion and Future Work} \label{sec:9}

In this paper, we have investigated multi-item lowest unique bid auctions in discrete bid spaces under heterogeneous budget constraints and incomplete information. Except for very special cases, the game does not  have equilibria in pure strategies. As a constrained finite game, there is at least one mixed Bayes-Nash equilibrium. A mixed equilibrium is explicitly computed in two bidder setup with resubmission possibilities. We have proposed a distributed strategic learning algorithm to approximate equilibria in the general setting. The numerical investigation has shown that the proposed algorithm can effectively learn Nash equilibrium  in few number of steps.  It is shown that the auctioneers can make a positive revenue when the number of bidders per bid exceeds a certain threshold.

%\appendix

%\section*{Proofs}
% that's all folks

%\newpage
\section*{Author Information}

{\it Yida Xu}  received  the B.Sc. degree in  Electronic Information Engineering from Chongqing University and the M.S. degree  in Information and Communication Engineering from Zhejiang University in China. He is a NYUAD Global Network Ph.D. candidate in the department of Electrical and Computer Engineering at New York University  Tandon School of Engineering. His research interests include auction theory, game theory and machine learning.

{\it Hamidou Tembine} (S'06-M'10-SM'13) received the M.S. degree in Applied Mathematics from Ecole Polytechnique in 2006 and the Ph.D. degree in Computer Science  from University of Avignon in 2009. His current research interests include evolutionary games, mean field stochastic games and applications. In December 2014, Tembine received the IEEE ComSoc Outstanding Young Researcher Award for his promising research activities for the benefit of the society. He was the recipient of 7 best article awards in the applications of game theory. Tembine is a prolific researcher and holds 150  scientific publications including magazines, letters, journals and conferences. He is author of the book on "distributed strategic learning for engineers" (published by CRC Press, Taylor \& Francis 2012), and co-author of the book "Game Theory and Learning in Wireless Networks" (Elsevier Academic Press). Tembine has been co-organizer of several scientific meetings on game theory in networking, wireless communications and smart energy systems.   He is a senior member of IEEE.


\begin{thebibliography}{1}

\bibitem{nicolas} Marco Scarsini, Eilon Solan, Nicolas Vieille: Lowest Unique Bid Auctions, Preprint 2010.


\bibitem{reftv3}
 Vickrey, W.: Counterspeculation, auctions, and competitive sealed tenders. J. Finance 16, 8–37.1961.
\bibitem{reftv1}
Maskin E. S.,  Riley J. G.  Asymmetric auctions. Rev. Econom. Stud. 67, 413-438, 2000.
\bibitem{lebrun1}
Lebrun B: First-price auctions in the asymmetric $n$ bidder case. International Economic Review 40:125-142, (1999)
\bibitem{lebrun2} Lebrun B.: Uniqueness of the equilibrium in first-price auctions. Games
and Economic Behavior 55:131-151, 2006.
\bibitem{myeron1981} Myerson R., Optimal Auction Design, Mathematics of Operations Research, 6 (1981),pp. 58-73.
\bibitem{multiitem1}
Bang-Qing Li, Jian-Chao Zeng, Meng Wang and Gui-Mei Xia, A negotiation model through multi-item auction in multi-agent system, Machine Learning and Cybernetics, 2003 International Conference on, 2003, pp. 1866-1870 Vol.3.
\bibitem{multiitem2}
C. Yi and J. Cai, Multi-Item Spectrum Auction for Recall-Based Cognitive Radio Networks With Multiple Heterogeneous Secondary Users,  IEEE Transactions on Vehicular Technology, vol. 64, no. 2, pp. 781-792, Feb. 2015.

\bibitem{multiitem3}
N. Wang and D. Wang, Model and algorithm of winner determination problem in multi-item E-procurement with variable quantities, The 26th Chinese Control and Decision Conference (2014 CCDC), Changsha, 2014, pp. 5364-5367.

\bibitem{multiitem4}
Rituraj and A. K. Jagannatham, Optimal cluster head selection schemes for hierarchical OFDMA based video sensor networks," Wireless and Mobile Networking Conference (WMNC), 2013 6th Joint IFIP, Dubai, 2013, pp. 1-6.

\bibitem{infocom15v1}
J. Zhao, X. Chu, H. Liu, Y. W. Leung and Z. Li, Online procurement auctions for resource pooling in client-assisted cloud storage systems,  IEEE Conference on Computer Communications (INFOCOM), 2015, pp. 576-584.

\bibitem{infocom15v2}R. Zhou, Z. Li and C. Wu, An online procurement auction for power demand response in storage-assisted smart grids,  IEEE Conference on Computer Communications, INFOCOM, 2015, pp. 2641-2649.


\bibitem{houba} H. Houba, D. Laan, D. Veldhuizen, Endogenous entry in lowest-unique sealed-bid auctions,Theory and Decision, 71, 2, 2011, pp. 269-295

\bibitem{dwn}Stefan De Wachter and T. Norman, The predictive power of Nash equilibrium in difficult games: an empirical analysis of minbid games, Department of Economics at the University of Bergen, 2006
\bibitem{rapp} Rapoport, Amnon and Otsubo, Hironori and Kim, Bora and Stein, William E.:
Unique bid auctions: Equilibrium solutions and experimental evidence,
MPRA Paper 4185, University Library of Munich, Germany, Jul 2007.
\bibitem{vino0} J.Eichberger and D. Vinogradov, Least Unmatched Price Auctions : A First Approach, Discussion Paper Series 471, 2008
\bibitem{vino1}J. Eichberger, Dmitri Vinogradov: Lowest-Unmatched Price Auctions, International Journal of Industrial Organization, 2015, Vol. 43, pp. 1-17

\bibitem{vino2} J. Eichberger, Dmitri Vinogradov: Efficiency of Lowest-Unmatched Price Auctions, Economics Letters, 2016, Vol. 141, pp.  98?102

\bibitem{wang} Erik Mohlin, Robert Ostling, Joseph Tao-yi Wang, Lowest unique bid auctions with population uncertainty, Economics Letters, Vol. 134, Sept. 2015, Pages 53-57.


\bibitem{huang} Huang, G.Q., Xu, S.X., 2013. Truthful multi-unit transportation procurement auctions for logistics e-marketplaces. Transport. Res. Part B: Methodol. 47, 127-148.

\bibitem{ecom}   Dong-Her Shih, David C. Yen, Chih-Hung Cheng, Ming-Hung Shih, A secure multi-item e-auction mechanism with bid privacy, Computers and Security 30 (2011) 273-287

  \bibitem{mastercourselearning} Tembine H.: { Distributed strategic learning for wireless engineers}, Master Course,  CRC Press, Taylor \& Francis, 2012.

\bibitem{harsanyi} Harsanyi, J. 1973. Games with randomly disturbed payoffs: A new rationale for mixed strategy
equilibrium points. Internat. J. Game Theory 2, 1-23.
\bibitem{weibull} Weibull, J., Evolutionary game theory, MIT Press, 1995.
\bibitem{anor} Xin Luo, Hamidou Tembine: Evolutionary coalitional games for random access control, Annals of Operations Research, p1-34, 2016,  DOI: 10.1007/s10479-016-2198-0
%
%  \bibitem{spite1}
%Morgan J., Steiglitz K., Reis G., 2003. The spite motive and equilibrium behavior in auctions. Contrib. Econ. Anal.
%Pol. 2.
%\bibitem{spite2}
%Brandt F., Sandholm T., and Shoham, Y. 2007. Spiteful Bidding in Sealed-Bid Auctions. In Proceedings of the International Joint Conference on Artificial Intelligence (IJCAI).
%
%\bibitem{dyna2} P. Kotolenez and T. Kurtz. Macroscopic limits for stochastic partial
%differential equations of McKean-Vlasov type. Probability theory and
%related fields, 146(1):189-222, 2010

%
%\bibitem{nisan}Y. Mansour, S. Muthukrishnan and N. Nisan.
%Doubleclick Ad Exchange Auction, 2012.
%\bibitem{adx}
%S. Muthukrishnan. Ad exchanges: Research issues. In Proc. WINE. LNCS, New York,
%1-12
%
%\bibitem{pnas} G. Fibich, A. Gavious, and E. Solan,
% Averaging principle for second-order approximation of heterogeneous models with homogeneous models,
%Proceedings of the National Academy of Sciences of the United States of America,
% doi: 10.1073/pnas.1206867109,
%PNAS, 2012.
%\bibitem{dynamict1}
%G. Fibich and N. Gavish.
%Asymmetric first-price auctions: A dynamical systems approach,
% Mathematics of research operations,  vol. 37 no. 2, 219-243, May 2012.
%
%\bibitem{reftv2}
%Fibich G. and Gavious A.: Asymmetric first-price auctions - a perturbation approach, Mathematics of Operational Research, 28, 836-852 (2003)
%\bibitem{marshall}
%Marshall R. C., M. J. Meurer, J.-F. Richard, W. Stromquist.  Numerical analysis of asymmetric first price auctions. Games
%Econom. Behav. 7(2) 193-220. 1994.
%\bibitem{ganishnum}
%Fibich G., N. Gavish.  Numerical simulations of asymmetric first-price auctions. Games Econom. Behav. 72(2) 479-495. 2011.
%\bibitem{alex}R. Gummadi
%P. Key,  A. Proutiere,
%Repeated Auctions under Budget Constraints:
%Optimal bidding strategies and Equilibria, May 2012.
%

%\bibitem{shapley}Shapley L. S.: Stochastic games. PNAS 39 (10): 1095-1100, 1953.

%\bibitem{zamir} Reny, P., and S. Zamir: On the Existence of Pure Strategy Monotone Equilibria in Asymmetric First-Price Auctions," Econometrica, 2004, Vol. 72 No. 4, pp. 1105-1125.
%

\end{thebibliography}
\end{document}